\documentclass[aps,reprint,superscriptaddress]{revtex4-2}
\usepackage{comment}
\usepackage{graphicx} 
\usepackage{amsmath}
\usepackage{amsthm}
\usepackage{amssymb}
\usepackage{mathtools}
\usepackage{dsfont}
\usepackage{physics}
\usepackage{subcaption}
\usepackage{xcolor}
\usepackage{hyperref}

\newtheorem{theorem}{Theorem}

\newtheorem{corollary}{Corollary}[theorem]

\newtheorem{lemma}[theorem]{Lemma}

\newtheorem{proposition}[theorem]{Proposition}
\newtheorem{remark}[theorem]{Remark}

\newtheorem{theorem*}{Theorem}
\newtheorem{lemma*}[theorem*]{Lemma}
\newtheorem{proposition*}[theorem*]{Proposition}

\begin{document}

\title{A Representation-Theoretic Framework for Characterizing Barren Plateaus}

\author{Pedro Alcântara}
\email{pedro.antonio.alcantara@usp.br} 
\affiliation{International Institute of Physics, Federal University of Rio Grande do Norte, 59078-970, Natal, Brazil}
\author{Leandro Morais}
\email{leandro.silva@ifsc.usp.br}
\affiliation{Instituto de Física de São Carlos, Universidade de São Paulo, CP 369, 13560-970 São Carlos, SP, Brazil}
\author{Rafael Chaves}
\affiliation{International Institute of Physics, Federal University of Rio Grande do Norte, 59078-970, Natal, Brazil}

\date{\today}

\begin{abstract}
The scalability of variational quantum algorithms is fundamentally limited by the barren plateau effect, where the cost-function variance vanishes with system size, rendering optimization impractical. Recent Lie-algebraic approaches for deep parameterized have enabled a unified analytical understanding of this challenge but require either the initial state or the measurement observable to belong to the dynamical Lie algebra generated by the circuit. Here, we introduce a representation-theoretic framework under $2$-design hypothesis showing that variational quantum landscapes admit a natural decomposition into irreducible representation channels. This yields exact expressions and analytical bounds for the cost-function variance applicable to arbitrary initial states and observables, with previous Lie-algebraic results emerging as a special case. We illustrate the framework by analyzing the energy landscape of the one-dimensional ANNNI model for several circuit architectures, revealing trainability regimes inaccessible to existing methods. Our results establish a general representation-theoretic framework for analyzing variational quantum landscapes, substantially extending the analytical theory of barren plateaus.
\end{abstract}

\maketitle

\section{Introduction}

Variational quantum algorithms (VQAs) offer a versatile framework for leveraging quantum processors across quantum simulation, optimization, and machine learning \cite{peruzzo2014variational,farhi2014quantum,beckey2022variational,cerezo2021variational,sciorilli2025towards}. However, their scalability is threatened by barren plateaus (BPs), regimes in which loss differences become exponentially suppressed with increasing system size, rendering the optimization increasingly challenging \cite{sweke2020stochastic,sweke2020stochastic,stokes2020quantum,crooks2019gradients,arrasmith2022equivalence}. In this regime, resolving a useful descent direction from finite-shot measurements requires an exponential number of measurement samples, rendering optimization intractable \cite{wang2021noise,larocca2025barren,arrasmith2022equivalence,holmes2022connecting}. Understanding the structural mechanisms that govern the emergence of barren plateaus has therefore become a central problem in variational quantum computing \cite{de2025lie,cerezo2025does,kazi2025analyzing,ragone2022representation}.

A major step toward a unified understanding of barren plateaus was provided by dynamical Lie algebra (DLA) methods 
\cite{ragone2024lie,ragone2024lie,goh2025lie,fontana2024characterizing,wiersema2024classification}.
For sufficiently deep parametrized circuits, the DLA generated by the circuit determines a compact Lie group over which the circuit explores its state space, allowing the cost-function variance to be expressed in terms of algebraic properties of the circuit, the initial state, and the observable. This framework unifies several previously distinct mechanisms for the emergence of barren plateaus, including circuit expressibility, properties of the input state, and the structure of the cost observable~\cite{ragone2024lie}. It also clarifies why highly expressive circuits generically develop exponentially flat landscapes, while restricted ansatze can retain trainable directions. However, the resulting formulas rely on a crucial structural assumption: either the initial state or the observable must belong to the dynamical Lie algebra generated by the circuit. While this assumption holds in certain benchmark scenarios, it excludes many physically relevant problems and leaves a broad class of variational landscapes uncharacterized. 

While the DLA characterizes the symmetries of a given circuit, it is a subspace strictly contained within the much larger operator space $\mathcal{B}(\mathcal{H})$. Therefore, restricting the initial state or measurement observable to the DLA leaves out a broad and physically relevant class of operators, those containing components that transform outside the adjoint representation of the DLA. This occurs, for example, for observables containing higher-order correlations, highly entangled states, encoded states, and Hamiltonians whose operator content extends beyond the algebra generated by the circuit. Recent work has demonstrated that the DLA restriction can indeed be relaxed in specific settings, such as parametrized matchgate circuits, where the relevant structure is captured by generalized representation modules rather than by the Lie algebra itself \cite{diaz2023showcasing}. These results point toward a broader representation-theoretic description, but a general framework for arbitrary compact Lie groups, initial states, and observables remains desirable.

In this work, we show that this limitation is not intrinsic to the problem, but rather to the mathematical language used to describe it. We regard the variational quantum landscape---the cost function defined on the parameter manifold, or equivalently on the orbit of the underlying compact Lie group---as the fundamental object of study. This perspective shifts the emphasis from operators in the dynamical Lie algebra to functions on group orbits, revealing that the characterization of landscape fluctuations is naturally a problem of harmonic analysis. By decomposing the landscape into irreducible representation channels, we derive exact expressions together with analytical upper and lower bounds for the cost-function variance that are valid for arbitrary initial states and observables. Rather than replacing the existing Lie-algebraic theory, our formulation contains it naturally as a particular representation-theoretic sector, with the previously known DLA formulas emerging as a special case. 

We illustrate the framework through the energy landscapes of the one-dimensional axial next-nearest-neighbor Ising (ANNNI) model \cite{selke1988annni,Suzuki,canabarro2019unveiling,morais2026distinguishing} for three circuit architectures with qualitatively different symmetry structures. The fully expressive $\mathfrak{su}(2^n)$ architecture exhibits the familiar exponentially suppressed variance and barren plateau. In contrast, the local product architecture generated by $\mathfrak{su}(2)^n$ has an extensive variance, while a highly constrained architecture generated by $\mathfrak{u}(1)^{2n-1}$ admits a system-size-independent lower bound on the variance. These regimes are analytically derived for separable stabilizer, separable nonstabilizer, and maximally entangled initial states, and are corroborated by numerical simulations. Taken together, the examples demonstrate that the representation-theoretic structure of the circuit determines how fluctuations are distributed across the landscape, providing a general framework for analyzing trainability beyond the restrictions of dynamical Lie algebra membership.

\section{Harmonic Analysis of Variational Quantum Landscapes}\label{sec:framework}

Our goal is to derive analytical expressions for the mean and variance of the cost function associated with deep parameterized quantum circuits. Throughout this work, we assume that the circuit ensemble forms a unitary $2$-design over a compact Lie group $G$. Under this assumption, averages of linear and quadratic functionals over circuit parameters coincide with their Haar averages over $G$, allowing the trainability problem to be reformulated in terms of harmonic analysis on compact Lie groups.

Rather than working directly with the circuit parameters, our approach exploits the geometry of the orbit generated by the initial state. The central idea is to decompose the cost function into contributions associated with the irreducible representations of the underlying group. As we shall see, this decomposition naturally yields a corresponding decomposition of the variance, leading to a general expression valid for arbitrary initial states and observables.

The results in this section are stated without proof in order to keep a clear discussion of their meaning. The proofs are essentially based on Schur's Orthogonality Relations and can be found in Appendix \ref{sec:proof}. For a more detailed discussion about Lie Algebras, Representation Theory and Harmonic Analysis, we refer to \cite{fulton2013representation,humphreys1972representation,folland}.

Let $\mathcal H_n=(\mathbb C^2)^{\otimes n}$ denote the Hilbert space of an $n$-qubit system, and let $\mathcal B(\mathcal H_n)$ be the space of linear operators acting on $\mathcal H_n$, endowed with the Hilbert--Schmidt inner product
\begin{equation}
\ip{A_1}{A_2}=\mathrm{Tr}(A_1^\dagger A_2).
\end{equation}
Given an initial pure state $\rho$, an observable $O$, and a parameterized quantum circuit
$\mathcal U(\boldsymbol\theta)$, the cost function can be written as
\begin{equation}
\ell_{\boldsymbol\theta}(\rho,O) = \ip{\mathcal U(\boldsymbol\theta)\rho\mathcal U^\dagger(\boldsymbol\theta)}{O}\,.
\label{cost_func}
\end{equation}

Since we work in the deep-circuit regime, expectation values with respect to the circuit parameters are replaced by Haar averages over the compact Lie group generated by the circuit. Recall that the group of unitary operators on $\mathcal H_n$ is the compact Lie group $U(\mathcal H_n) \equiv U(2^n)$, and let $\varphi : G\to U(2^n)$ be a unitary representation of a compact Lie group $G$, which induces a representation on $\mathcal B(\mathcal H_n)$ by
\begin{equation}\label{op_action}
    \mathcal B(\mathcal H_n)\ni A\mapsto A^g = \varphi(g)A\varphi(g)^\dagger \ , \ \ \forall \, g \in G\, .
\end{equation}
For the $G$-orbit $\mathcal O$ generated by $\rho$, we set $\beta_{\rho, O}\in C^\infty(\mathcal O)$,
\begin{equation}
    \mathcal O \ni \rho' \mapsto \beta_{\rho, O}(\rho') = \ip{\rho'}{O} = \tr(\rho'O)\, ,
\end{equation}
and its lift $\widetilde \beta_{\rho, O}\in C^\infty(G)$,
\begin{equation}
    \widetilde \beta_{\rho, O}(g) = \beta_{\rho, O}(\rho^g).
\end{equation}
Consider the quantities
\begin{equation}\label{beta_mean}
    \langle \beta_{\rho, O} \rangle: = \int_{\mathcal O} \beta_{\rho, O}(\rho')d\rho' = \int_G \widetilde \beta_{\rho, O}(g)dg\, ,
\end{equation}
\begin{equation}\label{sq_beta_mean}
    \langle \beta_{\rho, O}^2 \rangle:= \int_{\mathcal O} \beta_{\rho, O}(\rho')^2d\rho' = \int_G \widetilde \beta_{\rho, O}(g)^2dg\, ,
\end{equation}
where the integrals are taken with respect to the normalized Haar measure of $G$ and the induced invariant measure on $\mathcal O$.

Let $\mathfrak g$ be the Lie algebra of $G$. We suppose the image of the naturally induced representation $\varphi_\ast:\mathfrak g \to \mathcal B(\mathcal H_n)$ is a DLA describing the deep parameterized quantum circuit depicted in \eqref{cost_func}. Thus, for any observable $O$, we have
\begin{equation}\label{mean_loss}
    \mathbb E_{\vb*\theta}[\ell_{\vb*\theta}(\rho, O)] = \langle \beta_{\rho, O} \rangle \, , 
\end{equation}
\begin{equation}\label{var_loss}
    \mathrm{Var}_{\vb*\theta}[\ell_{\vb*\theta}(\rho, O)] = \langle\beta_{\rho, O}^2\rangle - \langle \beta_{\rho, O}\rangle^2\, .
\end{equation}

Now, let
\begin{equation}\label{op_inv_decomp}
    \mathcal B(\mathcal H_n) \equiv \bigoplus_{\zeta \in \mathfrak{CG}(\varphi)} \mathcal H_\zeta\otimes \mathbb C^{\mathfrak m(\zeta)},
\end{equation}
be the decomposition of \eqref{op_action} into unitary irreps $\zeta:G\to U(\mathcal H_\zeta)$, where distinct elements of $\mathfrak{CG}(\varphi)$ are nonequivalent and $\mathfrak{m}(\zeta)\in \mathbb N$ denotes the multiplicity \footnote{The notation $\mathfrak{CG}(\varphi)$ refers to the fact that, if $\varphi$ is an irrep, such decomposition is an instance of a Clebsch-Gordan series for the $G$-representation $\varphi\otimes \varphi^\ast$, where $\varphi^\ast$ is the dual representation of $\varphi$.}. In particular, since $\mathds 1$ is fixed by $G$, $\mathfrak{CG}(\varphi)$ contains the trivial representation $\tau(g) = 1$ on $\mathcal H_\tau \simeq \mathbb C$ at least once. For each $\zeta\in \mathfrak{CG}(\varphi)$, let
\begin{equation}\label{pi_zeta}
    \pi_\zeta:\mathcal B(\mathcal H_n)\to \mathcal H_\zeta\otimes \mathbb C^{\mathfrak m(\zeta)},
\end{equation}
be the natural projection.

The decomposition \eqref{op_inv_decomp} naturally separates the operator space into invariant representation sectors, which organizes the statistical properties of the cost function so that its first two momenta can be computed by means of the identification of the contribution from each such sector. Indeed, using the decomposition \eqref{op_inv_decomp}-\eqref{pi_zeta}, the lifted cost function naturally splits into 
a finite Fourier series, with orthogonal 
contributions from each irreducible representation, as \footnote{By construction, the decomposition in \eqref{op_inv_decomp} is orthogonal with respect to the Hilbert-Schmidt inner product, meaning the maps $\pi_\zeta$ are, by definition, orthogonal projectors. As such, the identity $\ip{\pi_\zeta(A)}{B} = \ip{\pi_\zeta(A)}{\pi_\zeta(B)}$ holds for any operators $A, B \in \mathcal{B}(\mathcal{H}_n)$. However, we purposefully keep $O$ instead of its projections to maintain a clearer notation in the subsequent proofs.}
\begin{equation}\label{Beta_projection}
    \widetilde \beta_{\rho, O}(g) = \sum_{\zeta \in \mathfrak{CG}(\varphi)}\ip{\pi_\zeta(\rho)^g}{O}\, .
\end{equation}

\begin{remark}
Of course, a splitting equivalent to \eqref{Beta_projection} holds for the cost function $\beta_{\rho, O}$ itself, but we shall use its lift in the following calculations as the dependence on $g \in G$ makes the formulae more presentable.    
\end{remark}

The upcoming results states how each invariant representation $\zeta\in \mathfrak{CG}(\varphi)$ determines independent contributions to the first and the second moment of $\ell_{\vb*\theta}(\rho, O)$, and that these contributions sum up to the total values of the momenta of interest. 

To begin with, the next lemma shows that the average cost function depends exclusively on the trivial representation.

\begin{lemma}\label{lemma:mean_beta}
\begin{equation*}
    \langle \beta_{\rho, O}\rangle = \ip{\pi_\tau(\rho)}{\pi_\tau(O)} = \tr(\pi_\tau(\rho)\pi_\tau(O))\, .
\end{equation*}
\end{lemma}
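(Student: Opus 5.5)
The strategy is to integrate the Fourier decomposition \eqref{Beta_projection} term by term over $G$ and show that only the trivial sector survives. By \eqref{beta_mean} and linearity of the Haar integral,
\begin{equation*}
\langle \beta_{\rho,O}\rangle = \sum_{\zeta\in\mathfrak{CG}(\varphi)} \ip{\int_G \pi_\zeta(\rho)^g\,dg}{O},
\end{equation*}
where we use that the Hilbert--Schmidt pairing is continuous and (conjugate-)linear in its first slot, so it commutes with the integral. Since $\rho$ is Hermitian, the conjugation causes no trouble. This reduces the problem to computing the twirl $\mathcal T(A) := \int_G A^g\,dg$ on each isotypic block $\mathcal H_\zeta\otimes\mathbb C^{\mathfrak m(\zeta)}$.

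The key step is to show that $\mathcal T$ is the orthogonal projector onto the $G$-fixed subspace of $\mathcal B(\mathcal H_n)$. Invariance of the Haar measure gives $\mathcal T(A)^h=\mathcal T(A)$ for all $h\in G$, so $\mathcal T(A)$ is fixed. Conversely, $\mathcal T$ acts as the identity on fixed vectors. The action \eqref{op_action} is unitary for the Hilbert--Schmidt product, so $\mathcal T$ is also self-adjoint, and hence it is the orthogonal projector onto the fixed space. Next, a nonzero fixed vector inside an irrep $\mathcal H_\zeta$ spans a one-dimensional invariant subspace. By irreducibility this forces $\zeta$ to be one-dimensional and trivial, so $\zeta=\tau$. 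The same conclusion holds on each copy $\mathcal H_\zeta\otimes e_i$ of the multiplicity space. Alternatively, Schur's orthogonality against the trivial character gives $\int_G \zeta(g)\,dg=0$ for $\zeta\neq\tau$. Either way, $\mathcal T(\pi_\zeta(\rho))=0$ for $\zeta\neq\tau$, while $\mathcal T(\pi_\tau(\rho))=\pi_\tau(\rho)$ because $G$ acts trivially on $\mathcal H_\tau\otimes\mathbb C^{\mathfrak m(\tau)}$.

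It follows that $\langle\beta_{\rho,O}\rangle=\ip{\pi_\tau(\rho)}{O}$. By the orthogonality of the decomposition \eqref{op_inv_decomp}, noted in the footnote, this equals $\ip{\pi_\tau(\rho)}{\pi_\tau(O)}$. Finally, $\pi_\tau$ commutes with the adjoint $A\mapsto A^\dagger$, because the action \eqref{op_action} commutes with $\dagger$ and so the fixed space is $\dagger$-closed. Hence $\pi_\tau(\rho)$ is Hermitian, and the inner product becomes $\tr(\pi_\tau(\rho)\pi_\tau(O))$.

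The main obstacle is the identification of the twirl with the projector $\pi_\tau$, and in particular the fact that the trivial representation may occur with multiplicity $\mathfrak m(\tau)>1$. For example, the commutant of $\varphi(G)$ can be large, as in the $\mathfrak u(1)^{2n-1}$ case. So one must argue on the whole fixed space rather than on a single copy. The self-adjointness argument above handles this uniformly, and I would present it that way rather than through characters.
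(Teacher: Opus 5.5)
Your proof is correct and is essentially the paper's argument. The paper also averages the sector decomposition over $G$ and uses Schur's orthogonality relations, which you mention as an alternative to your fixed-vector/irreducibility argument, to conclude that every nontrivial isotypic component is annihilated while the trivial one survives; you additionally make explicit what the paper leaves implicit, namely that the twirl equals $\pi_\tau$ even when $\mathfrak m(\tau)>1$ and that $\pi_\tau(\rho)$ is Hermitian, so the inner product can be written as a trace.
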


Lemma \ref{lemma:mean_beta} reflects the fact that averaging over the group action removes every component of the state and observable except those on which the $G$-action is trivial. In other words, the group average acts as a projector onto the subspace of $\mathcal B(\mathcal H_n)$ fixed by $G$, so that only the fixed parts of $\rho$ and $O$ contribute to the mean value of the cost function. 

The next proposition shows that, under the assumptions commonly considered in the Lie-algebraic theory of barren plateaus, this expression simplifies considerably. In particular, for irreducible representations and traceless observables, the mean vanishes identically, so the variance is entirely determined by the second moment.

\begin{proposition}\label{prop2}
    Under the hypothesis of $2$-design, if $\varphi$ is an irrep and $\tr(O) = 0$, then $\mathbb E_{\vb*\theta}[\ell_{\vb*\theta}(\rho,O)] = 0$ and $\mathrm{Var}_{\vb*\theta}[\ell_{\vb*\theta}(\rho, O)]=\langle\beta_{\rho, O}^2\rangle$.
\end{proposition}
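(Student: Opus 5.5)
The plan is to reduce everything to Lemma~\ref{lemma:mean_beta}, which writes the mean as $\ip{\pi_\tau(\rho)}{\pi_\tau(O)}$. The proof then only needs to identify the trivial isotypic sector $\mathcal H_\tau\otimes\mathbb C^{\mathfrak m(\tau)}$ of the conjugation action \eqref{op_action} when $\varphi$ is irreducible.

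\textbf{Step 1.} The trivial sector is, by definition, the subspace of $\mathcal B(\mathcal H_n)$ fixed by $G$. These are the operators $A$ with $\varphi(g)A\varphi(g)^\dagger=A$ for all $g$, that is, the intertwiners of $\varphi$ with itself. Since $\varphi$ is irreducible, Schur's lemma forces $A=c\,\mathds 1$. Hence $\mathfrak m(\tau)=1$, the fixed subspace is $\mathbb C\mathds 1$, and $\pi_\tau$ is the Hilbert--Schmidt orthogonal projection
\begin{equation*}
\pi_\tau(A)=\frac{\tr(A)}{2^n}\mathds 1 .
\end{equation*}
Orthogonality is guaranteed by the footnote after \eqref{Beta_projection}.

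\textbf{Step 2.} Substituting into Lemma~\ref{lemma:mean_beta} gives
\begin{equation*}
\langle\beta_{\rho,O}\rangle=\frac{\tr(\rho)\tr(O)}{2^n}=\frac{\tr(O)}{2^n}.
\end{equation*}
This vanishes because $\tr(O)=0$. Equation \eqref{mean_loss}, which is where the $2$-design hypothesis enters by replacing parameter averages with Haar averages over $G$, then gives $\mathbb E_{\vb*\theta}[\ell_{\vb*\theta}(\rho,O)]=0$.

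\textbf{Step 3.} Equation \eqref{var_loss} then collapses to $\mathrm{Var}_{\vb*\theta}[\ell_{\vb*\theta}(\rho,O)]=\langle\beta_{\rho,O}^2\rangle$.

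The only point needing care is Step 1: Schur's lemma is applied to the commutant of $\varphi(G)$, not to the operator representation, and one must check that this commutant is the full trivial isotypic component. This holds because being fixed under conjugation is exactly commuting with every $\varphi(g)$. Everything else is direct substitution.
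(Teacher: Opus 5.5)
Your proof is correct and follows the paper's argument. Schur's lemma identifies the trivial sector as $\mathrm{span}\{\mathds 1\}$ with multiplicity one, so $\tr(O)=0$ forces $\pi_\tau(O)=0$, Lemma~\ref{lemma:mean_beta} gives zero mean, and \eqref{var_loss} reduces to the second moment; your explicit formula $\pi_\tau(A)=\tr(A)\mathds 1/2^n$ simply makes concrete the paper's step $O\perp\mathds 1$.
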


Our objective, however, is to derive a general expression for the variance without assuming either irreducibility of the representation or vanishing mean. To this end, we now analyze the second moment by exploiting the decomposition \eqref{Beta_projection}, so we define
\begin{equation}\label{psi_comp_mean_beta_sq}
\begin{aligned}
    \mathcal C_\zeta(\rho, O) & :=\int_G\ip{\pi_\zeta(\rho)^g}{O}\ip{O}{\pi_\zeta(\rho)^g}dg\\
    & = \int_G|\tr(\pi_\zeta(\rho)^g\pi_\zeta(O))|^2dg,
\end{aligned}
\end{equation}
for each $\zeta \in \mathfrak{CG}(\varphi)$. As the following lemma shows, it is precisely $\mathcal C_\zeta(\rho, O)$ that quantifies the contribution of the irreducible representation $\zeta$ to the second moment of the cost function (see Fig. \ref{fig:sector}).

\begin{lemma}\label{lemma:mean_beta_sq}
    \begin{equation*}
    \begin{aligned}
        \langle \beta_{\rho, O}^2\rangle & = \sum_{\zeta\in\mathfrak{CG}(\varphi)}\mathcal C_\zeta(\rho, O)\, .
    \end{aligned}
    \end{equation*}
\end{lemma}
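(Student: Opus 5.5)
The plan is to expand the square of the finite Fourier series \eqref{Beta_projection} and show that all cross terms between distinct representation sectors integrate to zero over $G$. What remains are exactly the diagonal terms $\mathcal C_\zeta(\rho,O)$.

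First I would note that $\widetilde\beta_{\rho,O}$ is real-valued, because $\rho$ and $O$ are Hermitian and the trace of a product of two Hermitian operators is real. Hence $\widetilde\beta_{\rho,O}(g)^2=\widetilde\beta_{\rho,O}(g)\overline{\widetilde\beta_{\rho,O}(g)}$. Write
\[
f_\zeta(g):=\ip{\pi_\zeta(\rho)^g}{O},
\]
so that $\overline{f_\zeta(g)}=\ip{O}{\pi_\zeta(\rho)^g}$. Inserting \eqref{Beta_projection} into \eqref{sq_beta_mean} gives
\[
\langle\beta_{\rho,O}^2\rangle=\sum_{\zeta,\zeta'\in\mathfrak{CG}(\varphi)}\int_G f_\zeta(g)\,\overline{f_{\zeta'}(g)}\,dg .
\]
The diagonal terms $\zeta=\zeta'$ are by definition $\mathcal C_\zeta(\rho,O)$, as in \eqref{psi_comp_mean_beta_sq}. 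It therefore suffices to prove that $\int_G f_\zeta\overline{f_{\zeta'}}\,dg=0$ whenever $\zeta\neq\zeta'$.

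To do this I would identify each $f_\zeta$ as a linear combination of matrix coefficients of the irrep $\zeta$.
\begin{itemize}
\item The sectors $\mathcal H_\zeta\otimes\mathbb C^{\mathfrak m(\zeta)}$ are $G$-invariant and mutually orthogonal, so the action \eqref{op_action} commutes with $\pi_\zeta$. Thus $\pi_\zeta(\rho)^g$ stays in the $\zeta$-isotypic sector, and $f_\zeta(g)=\ip{\pi_\zeta(\rho)^g}{\pi_\zeta(O)}$ (this is the footnote's identity).
\item Choose an orthonormal basis $\{e_i\otimes u_a\}$ of $\mathcal H_\zeta\otimes\mathbb C^{\mathfrak m(\zeta)}$ adapted to the tensor structure. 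In this basis the action is $\zeta(g)\otimes\mathds 1$.
\item Expanding $\pi_\zeta(\rho)=\sum r_{ia}\,e_i\otimes u_a$ and $\pi_\zeta(O)=\sum o_{jb}\,e_j\otimes u_b$ then gives
\[
f_\zeta(g)=\sum_{i,j,a}\overline{\zeta_{ji}(g)}\;\overline{r_{ia}}\,o_{ja},
\]
a finite linear combination of (conjugated) matrix coefficients of $\zeta$.
\end{itemize}

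The Schur orthogonality relations state that $\int_G \zeta_{ij}(g)\overline{\zeta'_{kl}(g)}\,dg=0$ for nonequivalent irreps $\zeta,\zeta'$. Since distinct elements of $\mathfrak{CG}(\varphi)$ are nonequivalent by construction, every cross term vanishes by bilinearity, and the lemma follows.

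The step needing the most care is the multiplicity bookkeeping. One must check that the action on an isotypic block with $\mathfrak m(\zeta)>1$ really takes the form $\zeta(g)\otimes\mathds 1$ in an orthonormal adapted basis, so that no mixing between different irrep labels appears. This rests on the decomposition \eqref{op_inv_decomp} being orthogonal and $G$-equivariant, which holds because the representation \eqref{op_action} is unitary for the Hilbert--Schmidt inner product. I would verify that unitarity explicitly before invoking Schur orthogonality.
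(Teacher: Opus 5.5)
Your proof is correct and follows the paper's strategy: expand $\widetilde\beta_{\rho,O}^2=\widetilde\beta_{\rho,O}\overline{\widetilde\beta_{\rho,O}}$ via \eqref{Beta_projection} into a double sum over $\zeta,\zeta'$, identify the diagonal terms as $\mathcal C_\zeta(\rho,O)$, and show that the cross terms with $\zeta\neq\zeta'$ integrate to zero. The only difference is packaging: the paper applies Schur's lemma to the $G$-equivariant averaged operator $T_\rho^{\zeta,\zeta'}(A)=\int_G\ip{\pi_\zeta(\rho)^g}{A}\,\pi_{\zeta'}(\rho)^g\,dg$, which maps the $\zeta$-sector into the $\zeta'$-sector, whereas you write each $f_\zeta$ in matrix coefficients of $\zeta$ and invoke Schur orthogonality directly --- two forms of the same argument.
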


The next step is to understand the relevant properties of $\mathcal C_{\zeta}(\rho, O)$.

\begin{figure}[t!]
    \includegraphics[width=1\linewidth]{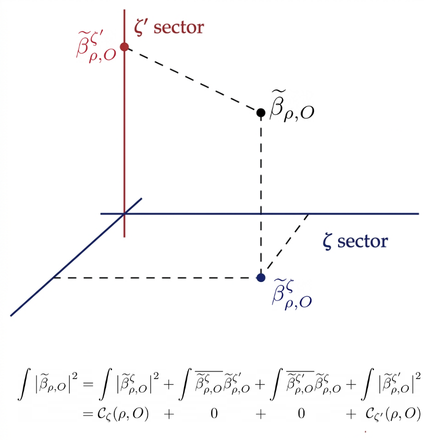}
    \caption{Geometric interpretation of the cost function variance decomposition. The function $\widetilde{\beta}_{\rho,O}$ is projected onto distinct irreducible representation sectors, cf. \eqref{Beta_projection}. Upon integration over the group, cross-correlation terms vanish due to Schur's orthogonality, reducing the second moment to a sum of independent, non-negative contributions as established in Lemma \ref{lemma:mean_beta_sq}.}
    \label{fig:sector}
\end{figure}

We begin with the trivial representation. It plays a distinguished role because it captures the invariant component responsible for the mean value of the cost function. As the next lemma shows, its contribution admits a particularly simple expression.

\begin{lemma}\label{lemma:C_tau}
For the trivial representation,
\begin{equation*}
    \mathcal C_\tau(\rho, O) = \tr(\pi_\tau(\rho)\pi_\tau(O))^2\, . 
\end{equation*}
\end{lemma}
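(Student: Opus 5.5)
The plan is to use the defining property of the trivial sector: $G$ acts trivially on $\mathcal H_\tau\otimes\mathbb C^{\mathfrak m(\tau)}$. Every vector in the image of $\pi_\tau$ is therefore fixed by the action \eqref{op_action}, and $\pi_\tau(\rho)^g=\pi_\tau(\rho)$ for all $g\in G$. Substituting this into the definition \eqref{psi_comp_mean_beta_sq} removes the group dependence from the integrand:
\begin{equation*}
\mathcal C_\tau(\rho,O)=\int_G |\tr(\pi_\tau(\rho)\pi_\tau(O))|^2\,dg=|\tr(\pi_\tau(\rho)\pi_\tau(O))|^2,
\end{equation*}
where the last equality uses that the Haar measure is normalized.

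It then remains to remove the absolute value, which amounts to showing that $\tr(\pi_\tau(\rho)\pi_\tau(O))$ is real. For this I would argue that $\pi_\tau$ preserves Hermiticity. The map $A\mapsto A^\dagger$ commutes with the $G$-action, since $(\varphi(g)A\varphi(g)^\dagger)^\dagger=\varphi(g)A^\dagger\varphi(g)^\dagger$. It also preserves the Hilbert--Schmidt norm, so it maps the fixed subspace to itself and maps its orthogonal complement, the sum of the nontrivial sectors, to itself as well. Consequently $\pi_\tau(A)^\dagger=\pi_\tau(A^\dagger)$, and $\pi_\tau(\rho)$ and $\pi_\tau(O)$ are Hermitian. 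The trace of a product of two Hermitian operators is real.

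There is a shortcut that avoids this argument. By Lemma~\ref{lemma:mean_beta}, the quantity $\tr(\pi_\tau(\rho)\pi_\tau(O))$ equals $\langle\beta_{\rho,O}\rangle$, which is the average of the real-valued function $\tr(\rho' O)$ with $\rho'$ and $O$ Hermitian, and is therefore real. Either route lets us replace $|\cdot|^2$ by the square, which gives the claimed identity.

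No step here is a serious obstacle. The only point needing care is to justify that $\pi_\tau(\rho)$ is literally fixed by $G$, as an element and not merely up to the multiplicity identification. This holds because the trivial isotypic component is exactly the subspace $\{A: A^g=A\ \forall g\}$, which follows directly from the decomposition \eqref{op_inv_decomp}.
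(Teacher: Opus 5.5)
Your proposal is correct and follows the paper's proof, which likewise uses $\pi_\tau(\rho)^g=\pi_\tau(\rho)$ together with the normalization of the Haar measure. Your extra check that $\tr(\pi_\tau(\rho)\pi_\tau(O))$ is real, either because $\pi_\tau$ preserves Hermiticity or via Lemma~\ref{lemma:mean_beta}, supplies the step the paper leaves implicit when it drops the absolute value.
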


Lemma \ref{lemma:C_tau} confirms that the contribution of the trivial representation to the second moment is precisely the square of the mean established in Lemma \ref{lemma:mean_beta}. Consequently, when computing the variance, this contribution is exactly canceled by the subtraction of $\langle\beta_{\rho,O}\rangle^2$. The remaining irreducible representations therefore account entirely for the fluctuations of the cost function.

For non-trivial representations, the following lemma establishes that $\mathcal C_\zeta(\rho, O)$ is always nonnegative and provides a universal upper bound determined solely by the projections of the state and observable onto the corresponding representation sector.

\begin{lemma}\label{lemma:C_zeta_natural}
For any $\zeta \in \mathfrak{CG}(\varphi)$, we have 
\begin{equation*}
    0\le \mathcal C_\zeta(\rho, O) \le \dfrac{1}{\dim \zeta}\norm{\pi_\zeta(\rho)}^2\norm{\pi_\zeta(O)}^2\, ,
\end{equation*}
where the rightmost inequality is an actual equation if $\zeta$ is multiplicity free.
\end{lemma}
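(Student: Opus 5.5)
Nonnegativity is immediate from the second line of \eqref{psi_comp_mean_beta_sq}: $\mathcal C_\zeta(\rho,O)$ is the integral of $|\tr(\pi_\zeta(\rho)^g\pi_\zeta(O))|^2\ge 0$ against a probability measure. For the upper bound, I would reduce to Schur's orthogonality relations in coordinates adapted to the isotypic block $\mathcal H_\zeta\otimes\mathbb C^{\mathfrak m(\zeta)}$.

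Fix an orthonormal basis $\{e_i\}_{i=1}^{d}$ of $\mathcal H_\zeta$, where $d=\dim\zeta$, and the standard basis $\{f_a\}_{a=1}^{\mathfrak m}$ of $\mathbb C^{\mathfrak m(\zeta)}$. The Hilbert--Schmidt inner product restricts to the product inner product on $\mathcal H_\zeta\otimes\mathbb C^{\mathfrak m(\zeta)}$, since the decomposition is orthogonal and unitary. Write
\[
\pi_\zeta(\rho)=\sum_{i,a}r_{ia}\,e_i\otimes f_a,\qquad \pi_\zeta(O)=\sum_{j,b}o_{jb}\,e_j\otimes f_b .
\]
The group acts as $\zeta(g)\otimes\mathds 1$. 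Hence
\[
\ip{O}{\pi_\zeta(\rho)^g}=\sum_{i,j,a}\overline{o_{ja}}\,\zeta_{ji}(g)\,r_{ia}.
\]
Using the footnote identity $\ip{\pi_\zeta(\rho)^g}{O}=\ip{\pi_\zeta(\rho)^g}{\pi_\zeta(O)}$, the integrand is the squared modulus of this expression. Expanding the square gives
\[
\mathcal C_\zeta=\sum \overline{o_{ja}}\,r_{ia}\,o_{lb}\,\overline{r_{kb}}\int_G\zeta_{ji}(g)\overline{\zeta_{lk}(g)}\,dg .
\]
Schur orthogonality evaluates the integral as $\delta_{jl}\delta_{ik}/d$, so
\[
\mathcal C_\zeta=\frac1d\sum_{a,b}\Big(\sum_j\overline{o_{ja}}o_{jb}\Big)\Big(\sum_i r_{ia}\overline{r_{ib}}\Big)=\frac1d\,\tr\!\big(M_O\,\overline{M_\rho}\big).
\]
Here $M_O=R_O^\dagger R_O$ and $M_\rho=R_\rho^\dagger R_\rho$ are the $\mathfrak m\times\mathfrak m$ Gram matrices of the coefficient matrices $R_O=(o_{jb})$ and $R_\rho=(r_{ia})$.

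Both Gram matrices are positive semidefinite, and $\overline{M_\rho}=M_\rho^T$ is as well. Therefore $\tr(M_O\overline{M_\rho})\le \tr M_O\,\tr M_\rho$, because for PSD $A,B$ one has $\tr(AB)\le\norm{A}_{\mathrm{op}}\tr B\le \tr A\,\tr B$. Since $\tr M_O=\norm{\pi_\zeta(O)}^2$ and $\tr M_\rho=\norm{\pi_\zeta(\rho)}^2$, this yields the bound $\frac1d\norm{\pi_\zeta(\rho)}^2\norm{\pi_\zeta(O)}^2$.

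When $\mathfrak m(\zeta)=1$, the Gram matrices are scalars and equality holds. This gives the multiplicity-free clause; for the trivial representation it also recovers Lemma~\ref{lemma:C_tau}.

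The main obstacle is the bookkeeping in the Schur step, where complex conjugations and index pairings are easy to get wrong. That step requires the matrix coefficients $\zeta_{ji}$ to be taken with respect to an orthonormal basis of a unitary irrep. One must also verify that the isomorphism \eqref{op_inv_decomp} is isometric, so that the norms in the statement agree with the coefficient norms. I would fix this by choosing the identification in \eqref{op_inv_decomp} to be unitary from the start, which is always possible for a unitary representation of a compact group.
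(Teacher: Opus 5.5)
Your proof is correct and is essentially the paper's argument written in coordinates. The paper's operator $T^{\zeta,\zeta}_\rho$ acts on the $\zeta$-sector as $\mathds 1\otimes N$ with $N=\overline{M_\rho}/d$ (by Schur's lemma), and its bound $\norm{T^{\zeta,\zeta}_\rho}_{\mathrm{op}}\le \tr(T^{\zeta,\zeta}_\rho)/d$ is exactly your step $\tr(AB)\le\norm{A}_{\mathrm{op}}\tr B\le\tr A\,\tr B$.

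One harmless slip: your double sum equals $\frac{1}{d}\sum_{a,b}(M_O)_{ab}\overline{(M_\rho)_{ab}}=\frac{1}{d}\tr(M_O M_\rho)$, which agrees with Lemma~\ref{lemma:C_zeta_basis}, not $\frac{1}{d}\tr(M_O\overline{M_\rho})$. Nothing downstream changes, because $M_\rho$ and $\overline{M_\rho}$ are both positive semidefinite with trace $\norm{\pi_\zeta(\rho)}^2$.
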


Lemma \ref{lemma:C_zeta_natural} shows that the representation contributions are tightly controlled by the norms of the projected state and observable, with the dimension of the irreducible representation setting the natural scale. In the multiplicity-free case, the upper bound is saturated, as a consequence of Schur's Orthogonality Relations. These properties will allow us to derive both exact variance formulas and general analytical bounds in our main theorem.

The definition of $\mathcal C_\zeta(\rho,O)$ is intrinsic and independent of any particular basis. For practical calculations, however, it is useful to obtain an explicit expression in terms of the coordinates of $\rho$ and $O$ within each irreducible representation sector. To do so, it is necessary to specify a resolution for the multiplicities in the invariant decomposition \eqref{op_inv_decomp}. There are various approaches to deal with such degeneracies, with no natural universal method. The choice must be guided by the simplification of the computations to be made, as illustrated by our third example in Sec. \ref{sec:applications}. For now, we'll follow abstractly, supposing that such a resolution is taken, but with no specification of how it is done: denote by $\{e_j:1\le j \le \mathfrak m(\zeta)\}$ the canonical basis of $\mathbb C^{\mathfrak m(\varphi)}$, and let $\{u(\zeta;k):1\le k \le \dim \zeta\}$ be a standard orthonormal basis for $\mathcal H_\zeta$, so we get an orthonormal basis of $\mathcal B(\mathcal H_n)$ comprised by the operators
\begin{equation}\label{inv_basis}
    u(\zeta,j;k) \equiv u(\zeta;k)\otimes e_j\, ,
\end{equation}
for which
\begin{equation}\label{proj_psi_decomp}
\begin{aligned}
    \pi_\zeta(A) & = \sum_{j=1}^{\mathfrak m(\zeta)}\sum_{k=1}^{\dim\zeta}A_k^{(\zeta,j)} u(\zeta,j;k) \, ,\\
      \mbox{with} & \ A_k^{(\zeta,j)} = \ip{u(\zeta,j;k)}{A}\, .
\end{aligned}
\end{equation}
The following lemma provides a closed formula using the above coordinates.

\begin{lemma}\label{lemma:C_zeta_basis}
For \eqref{inv_basis}-\eqref{proj_psi_decomp},
    \begin{equation*}
        \mathcal C_\zeta(\rho, O) = \dfrac{1}{\dim \zeta}\sum_{k,l=1}^{\dim\zeta}\left|\sum_{j=1}^{\mathfrak m(\zeta)}O^{(\zeta,j)}_{k}\overline{\rho^{(\zeta,j)}_l}\right|^2\, .
    \end{equation*}
\end{lemma}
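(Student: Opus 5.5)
The plan is to expand $\mathcal C_\zeta(\rho,O)$ in the basis \eqref{inv_basis} and evaluate the resulting group integral with Schur's Orthogonality Relations. Since $\pi_\zeta$ is an orthogonal projector onto an invariant subspace, it commutes with the $G$-action, so $\pi_\zeta(\rho)^g=\pi_\zeta(\rho^g)$. By construction of the multiplicity resolution, $g$ acts on $u(\zeta,j;k)=u(\zeta;k)\otimes e_j$ as $\zeta(g)\otimes\mathds 1$. Concretely,
\begin{equation*}
u(\zeta,j;l)^g=\sum_{k=1}^{\dim\zeta}\zeta_{kl}(g)\,u(\zeta,j;k),
\end{equation*}
where $\zeta_{kl}(g)=\ip{u(\zeta;k)}{\zeta(g)u(\zeta;l)}$ are the matrix coefficients, which do not depend on $j$.

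Substituting \eqref{proj_psi_decomp} and using orthonormality of the basis \eqref{inv_basis} gives
\begin{equation*}
\ip{\pi_\zeta(\rho)^g}{O}=\sum_{k,l}\overline{\zeta_{kl}(g)}\,\sum_{j}\overline{\rho^{(\zeta,j)}_l}\,O^{(\zeta,j)}_k=\sum_{k,l}\overline{\zeta_{kl}(g)}\,M_{kl},
\end{equation*}
with $M_{kl}:=\sum_j O^{(\zeta,j)}_k\overline{\rho^{(\zeta,j)}_l}$. The essential point is that the sum over the multiplicity index $j$ is collected into the single matrix $M$ \emph{before} any integration. This works because the same irrep matrix $\zeta(g)$ acts on every copy.

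Next I form $|\ip{\pi_\zeta(\rho)^g}{O}|^2=\sum_{k,l,k',l'}\overline{\zeta_{kl}(g)}\zeta_{k'l'}(g)M_{kl}\overline{M_{k'l'}}$ and integrate over $G$. By Schur's Orthogonality Relations for the irreducible unitary representation $\zeta$,
\begin{equation*}
\int_G\overline{\zeta_{kl}(g)}\,\zeta_{k'l'}(g)\,dg=\frac{1}{\dim\zeta}\,\delta_{kk'}\delta_{ll'}.
\end{equation*}
This collapses the quadruple sum to $\frac{1}{\dim\zeta}\sum_{k,l}|M_{kl}|^2$, which is exactly the claimed formula.

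The only delicate step is justifying the $j$-independent action $u(\zeta,j;l)^g=\sum_k\zeta_{kl}(g)u(\zeta,j;k)$. This requires that the chosen resolution of multiplicities identify each copy with $\mathcal H_\zeta$ through an intertwiner that is both isometric and equivariant. I take this to be the meaning of the identification $\mathcal H_\zeta\otimes\mathbb C^{\mathfrak m(\zeta)}$ in \eqref{op_inv_decomp}, together with the ``standard'' basis $u(\zeta;k)$ being used identically in every copy. If the copies were instead identified by different unitarily equivalent bases, the cross-copy terms would acquire extra unitaries, and the formula would need a correction. I will state this hypothesis explicitly.

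As a consistency check, when $\mathfrak m(\zeta)=1$ the matrix $M$ has rank one, and $\sum_{k,l}|M_{kl}|^2=\norm{\pi_\zeta(O)}^2\norm{\pi_\zeta(\rho)}^2$. This recovers the equality case of Lemma~\ref{lemma:C_zeta_natural}.
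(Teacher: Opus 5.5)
Your proposal is correct and follows the paper's own argument. The paper's one-line proof substitutes the coordinate expansion \eqref{proj_psi_decomp} into the definition \eqref{psi_comp_mean_beta_sq} and collapses the group integral with Schur's Orthogonality Relations; you carry out that computation in full, and you rightly make explicit the condition the paper leaves implicit: each copy $\mathcal H_\zeta\otimes e_j$ must be identified with $\mathcal H_\zeta$ through the same unitary intertwiner, so that the matrix coefficients $\zeta_{kl}(g)$ do not depend on $j$.
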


Lemma~\ref{lemma:C_zeta_basis} reveals how each representation contribution depends on the overlap between the projections of the state and observable onto the corresponding irreducible representation. In particular, the multiplicity indices couple different copies of the same irreducible representation, whereas inequivalent representations remain completely independent.

We are now ready to assemble the previous ingredients into a general expression for the second moment. Lemma \ref{lemma:mean_beta_sq} provides the fundamental decomposition of the second moment into independent irreducible representation channels. Moreover, the upper bound in Lemma \ref{lemma:C_zeta_natural} shows that the contribution of each channel is controlled solely by the norms of the projected state and observable together with the dimension of the corresponding representation. This decomposition forms the basis of our general variance formula.

Combining the characterization of the mean with the decomposition of the second moment yields the central result of this work. The theorem below provides a complete representation-theoretic description of the mean and variance of the cost function for arbitrary initial states and observables under the $2$-design hypothesis.

\begin{theorem}\label{theo:mean_var_2-design}
    Under the hypothesis of $2$-design, the mean and the variance of $\ell_{\vb*\theta}(\rho, O)$ are given by
    \begin{equation*}
        \mathbb E_{\vb*\theta}[\ell_{\vb*\theta}(\rho,O)] = \tr(\pi_\tau(O)\pi_\tau(\rho))\, ,
    \end{equation*}
    \begin{equation*}
    \begin{aligned}
        \mathrm{Var}_{\vb*\theta}[\ell_{\vb*\theta}(\rho, O)]& = \sum_{\substack{\zeta\in \mathfrak{CG}(\varphi)\\ \zeta\ne \tau}}\mathcal C_\zeta(\rho, O)\\
        & \le \sum_{\substack{\zeta\in \mathfrak{CG}(\varphi)\\ \zeta\ne \tau}}\dfrac{1}{\dim\zeta}\norm{\pi_\zeta(\rho)}^2\norm{\pi_\zeta(O)}^2\, .
    \end{aligned}
    \end{equation*}
\end{theorem}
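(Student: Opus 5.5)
The plan is to assemble the theorem directly from the lemmas already stated, with the 2-design hypothesis used only to pass from circuit averages to Haar averages over $G$. First, by the 2-design assumption, the first and second moments of $\ell_{\vb*\theta}(\rho,O)$ over the circuit parameters coincide with the Haar integrals $\langle\beta_{\rho,O}\rangle$ and $\langle\beta_{\rho,O}^2\rangle$. These are exactly identities \eqref{mean_loss} and \eqref{var_loss}, since $\ell_{\vb*\theta}$ is linear in $\mathcal U\otimes\mathcal U^\ast$ and $\ell_{\vb*\theta}^2$ is linear in $\mathcal U^{\otimes 2}\otimes(\mathcal U^\ast)^{\otimes 2}$. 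The mean formula then follows immediately from Lemma \ref{lemma:mean_beta}, using that $\pi_\tau$ is an orthogonal projector onto a real-structure-compatible subspace, so the order in the trace is irrelevant.

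For the variance, I write $\mathrm{Var}=\langle\beta_{\rho,O}^2\rangle-\langle\beta_{\rho,O}\rangle^2$. By Lemma \ref{lemma:mean_beta_sq}, the second moment splits as $\sum_{\zeta}\mathcal C_\zeta(\rho,O)$. By Lemma \ref{lemma:C_tau}, the trivial summand is $\mathcal C_\tau(\rho,O)=\tr(\pi_\tau(\rho)\pi_\tau(O))^2$, which is exactly $\langle\beta_{\rho,O}\rangle^2$ by Lemma \ref{lemma:mean_beta}. Subtracting cancels this term and leaves the sum over $\zeta\neq\tau$.

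One subtlety needs a check here: if $\tau$ appears with multiplicity $\mathfrak m(\tau)>1$, the sector $\mathcal H_\tau\otimes\mathbb C^{\mathfrak m(\tau)}$ is the full commutant-type fixed space. I must confirm that Lemma \ref{lemma:C_tau} is stated for the whole isotypic component, which it is, since $\pi_\tau$ projects onto the full isotypic piece. In that case $\pi_\tau(\rho)^g=\pi_\tau(\rho)$ and $\mathcal C_\tau$ is trivially the square of a constant. I also need $\tr(\pi_\tau(\rho)\pi_\tau(O))$ to be real, so that its square equals its squared modulus. This holds because $\rho$ and $O$ are Hermitian and the $G$-action commutes with the adjoint, so $\pi_\tau$ preserves Hermiticity.

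Finally, the upper bound is obtained by applying the right-hand inequality of Lemma \ref{lemma:C_zeta_natural} termwise to each $\zeta\neq\tau$ and summing. The sum is finite because $\mathcal B(\mathcal H_n)$ is finite-dimensional, so no convergence issues arise. The only genuinely delicate step is the identification of the circuit moments with Haar moments, together with the orthogonality cancellation underlying Lemma \ref{lemma:mean_beta_sq}. Both are taken as given from the earlier results, so the proof of the theorem itself is a short bookkeeping argument.
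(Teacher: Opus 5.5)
Your proposal is correct and follows the paper's proof: identify the circuit moments with Haar moments via \eqref{mean_loss}--\eqref{var_loss}, read off the mean from Lemma~\ref{lemma:mean_beta}, cancel $\mathcal C_\tau$ against $\langle\beta_{\rho,O}\rangle^2$ using Lemmas~\ref{lemma:mean_beta_sq} and~\ref{lemma:C_tau}, and bound the remaining sectors termwise with Lemma~\ref{lemma:C_zeta_natural}. Swapping the order inside the trace for the mean needs only cyclicity of the trace, not the projector argument you give; your checks on the full $\tau$-isotypic component and on $\tr(\pi_\tau(\rho)\pi_\tau(O))$ being real are valid details that the paper leaves implicit.
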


Theorem \ref{theo:mean_var_2-design} is the main result of this work, showing that the cost function admits a remarkably simple structure: the trivial representation contributes exclusively to the mean, while every nontrivial irreducible representation contributes independently to the fluctuations. In this sense, the variance is completely organized by the representation-theoretic decomposition of the operator space. This result considerably extends previous Lie-algebraic analyses by removing any assumption that either the initial state or the observable belongs to the dynamical Lie algebra.

An immediate consequence of the theorem is that each representation contribution provides a rigorous lower bound on the total variance.

\begin{corollary}\label{cor:var_ge_C}
    Under the hypothesis of $2$-design, for any nontrivial $\zeta\in \mathfrak{CG}(\varphi)$,
    \begin{equation*}
        \mathrm{Var}_{\vb*\theta}[\ell_{\vb*\theta}(\rho, O)]\ge \mathcal C_\zeta(\rho, O)\, .
    \end{equation*}
\end{corollary}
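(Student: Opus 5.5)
The corollary follows directly from the exact variance formula in Theorem~\ref{theo:mean_var_2-design}, combined with the nonnegativity statement in Lemma~\ref{lemma:C_zeta_natural}. We first invoke the theorem, which under the $2$-design hypothesis gives
\begin{equation*}
\mathrm{Var}_{\vb*\theta}[\ell_{\vb*\theta}(\rho, O)] = \sum_{\substack{\zeta'\in \mathfrak{CG}(\varphi)\\ \zeta'\ne \tau}}\mathcal C_{\zeta'}(\rho, O).
\end{equation*}
This is a finite sum, since $\mathcal B(\mathcal H_n)$ is finite dimensional and therefore contains only finitely many inequivalent irreducible summands in \eqref{op_inv_decomp}.

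Next, we fix a nontrivial $\zeta\in\mathfrak{CG}(\varphi)$ and isolate its term from the rest of the sum:
\begin{equation*}
\mathrm{Var}_{\vb*\theta}[\ell_{\vb*\theta}(\rho, O)] = \mathcal C_\zeta(\rho, O) + \sum_{\substack{\zeta'\ne \tau,\ \zeta'\ne\zeta}}\mathcal C_{\zeta'}(\rho, O).
\end{equation*}
The left inequality of Lemma~\ref{lemma:C_zeta_natural} gives $\mathcal C_{\zeta'}(\rho,O)\ge 0$ for every $\zeta'$. This can also be seen directly from the definition \eqref{psi_comp_mean_beta_sq}: $\mathcal C_{\zeta'}$ is the integral of $|\tr(\pi_{\zeta'}(\rho)^g\pi_{\zeta'}(O))|^2$ against a positive probability measure. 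Hence the remainder sum is nonnegative, and dropping it yields the claimed bound.

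There is no real obstacle here. The only points to check are that the sum in the theorem ranges over distinct inequivalent irreps, so that $\zeta$ appears exactly once and multiplicities are already absorbed into $\mathcal C_\zeta$ through the projection $\pi_\zeta$ onto the whole isotypic component, and that every other summand is nonnegative. Both facts are already supplied by the earlier results.
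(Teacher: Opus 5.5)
Your proposal is correct and matches the paper's implicit argument: the paper states the corollary as an immediate consequence of the exact variance formula in Theorem~\ref{theo:mean_var_2-design}, and all other nontrivial summands can be dropped because each satisfies $\mathcal C_{\zeta'}(\rho,O)\ge 0$ by Lemma~\ref{lemma:C_zeta_natural}. Your extra remarks are accurate: the sum is finite, and $\zeta$ appears exactly once with its multiplicities absorbed into the isotypic projection $\pi_\zeta$.
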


This simple observation is particularly useful in applications. Rather than evaluating the full variance, it is often sufficient to identify a single nontrivial representation with $\mathcal C_\zeta(\rho,O)>0$ in order to certify a nonvanishing variance.

Combining Theorem~\ref{theo:mean_var_2-design} with the explicit expression of Lemma~\ref{lemma:C_zeta_basis} yields the following coordinate representation of the variance.

\begin{corollary}\label{cor:var_basis}
Under the hypothesis of $2$-design, for the decomposition \eqref{inv_basis}-\eqref{proj_psi_decomp}, we have explicitly
    \begin{equation*}
    \begin{aligned}
        & \mathrm{Var}_{\vb*\theta}[\ell_{\vb*\theta}(\rho, O)] =\sum_{\substack{\zeta\in \mathfrak{CG}(\varphi)\\\zeta\ne \tau}}\dfrac{1}{\dim\zeta}\sum_{k,l=1}^{\dim\zeta}\left|\sum_{j=1}^{\mathfrak m(\zeta)}O^{(\zeta,j)}_{k}\overline{\rho^{(\zeta,j)}_l}\right|^2.
    \end{aligned}
    \end{equation*}
\end{corollary}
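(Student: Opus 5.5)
The corollary follows by substituting the coordinate formula of Lemma~\ref{lemma:C_zeta_basis} into the exact variance identity of Theorem~\ref{theo:mean_var_2-design}. The plan has three steps.

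First, I would invoke Theorem~\ref{theo:mean_var_2-design}. Under the $2$-design hypothesis it gives the exact identity
\begin{equation*}
\mathrm{Var}_{\vb*\theta}[\ell_{\vb*\theta}(\rho,O)]=\sum_{\zeta\in\mathfrak{CG}(\varphi),\,\zeta\neq\tau}\mathcal C_\zeta(\rho,O).
\end{equation*}
Only this equality part is used, not the inequality. It rests on three earlier facts:
\begin{itemize}
\item Eq.~\eqref{var_loss} expresses the variance as $\langle\beta_{\rho,O}^2\rangle-\langle\beta_{\rho,O}\rangle^2$.
\item Lemma~\ref{lemma:mean_beta_sq} splits the second moment into the channels $\mathcal C_\zeta$.
\item Lemmas~\ref{lemma:mean_beta} and~\ref{lemma:C_tau} show that $\mathcal C_\tau$ equals the squared mean, so it cancels.
\end{itemize}

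Second, I would fix the resolution of multiplicities \eqref{inv_basis}. This gives the orthonormal basis $u(\zeta,j;k)$ and the coordinates $A_k^{(\zeta,j)}=\ip{u(\zeta,j;k)}{A}$ of \eqref{proj_psi_decomp}. I would then apply Lemma~\ref{lemma:C_zeta_basis} to each nontrivial $\zeta$ separately, replacing $\mathcal C_\zeta(\rho,O)$ by $\frac{1}{\dim\zeta}\sum_{k,l}\bigl|\sum_j O_k^{(\zeta,j)}\overline{\rho_l^{(\zeta,j)}}\bigr|^2$. Summing over $\zeta\neq\tau$ gives exactly the claimed expression.

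Third, I would note one consistency point, which is the only place needing care. The basis in Lemma~\ref{lemma:C_zeta_basis} must be chosen as in \eqref{inv_basis}: within each copy $j$, the vectors $u(\zeta;k)\otimes e_j$ must transform by the same unitary matrix $\zeta(g)$ in the same standard basis. Then the Schur orthogonality computation underlying that lemma applies uniformly across copies. Since the corollary is stated for precisely that decomposition, no further obstacle arises. Everything else is direct substitution.
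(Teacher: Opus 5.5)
Your proposal is correct and follows the paper's own route: the corollary is obtained by substituting the coordinate formula of Lemma~\ref{lemma:C_zeta_basis} into the equality $\mathrm{Var}_{\boldsymbol\theta}[\ell_{\boldsymbol\theta}(\rho,O)]=\sum_{\zeta\neq\tau}\mathcal C_\zeta(\rho,O)$ of Theorem~\ref{theo:mean_var_2-design}. Your consistency remark is the right condition for the Schur-orthogonality computation behind that lemma to hold uniformly in $j$: every copy $u(\zeta;k)\otimes e_j$ must transform by the same matrix $\zeta(g)$, within an orthonormal resolution.
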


This expression makes the role of the multiplicity spaces completely explicit and is particularly convenient for analytical calculations and numerical implementations. 

It is worthy remarking that it shares structural similarities with the results presented in \cite[Theorem 1]{ragone2024lie}. To formally recover their result, let us consider the restrictive scenario where the observable $O$ belongs to complexification of the dynamical Lie algebra, explicitly $O \in i\varphi_\ast(\mathfrak{g})$ \footnote{Note that the image $\varphi_\ast(\mathfrak g)$ is a \emph{real} algebra comprised by skew-Hermitian operators, hence the imaginary unit and the complexification.}. Recall that the Lie algebra $\mathfrak{g}$ of a compact Lie group is reductive, meaning it admits a direct sum decomposition into ideals given by
\begin{equation}
\label{reduc_decomp}\mathfrak{g} = \mathfrak{g}_0 \oplus \mathfrak{g}_1 \oplus \cdots \oplus \mathfrak{g}_d\, ,
\end{equation}
where $\mathfrak{g}_0$ is an Abelian factor (the center), and $\mathfrak{g}_j$ is a simple Lie algebras for $1 \le j \le d$. Because of this reductive structure, the operator $O \in i\varphi_\ast(\mathfrak{g})$ can be uniquely decomposed as a sum
\begin{equation}
    O = \sum_{j=0}^d P_j(O)\, ,
\end{equation}
with $P_j$ being the orthogonal projection from $\mathcal B(\mathcal H_n)$ to the complexification of $\varphi_\ast(\mathfrak{g}_j)$. Under the adjoint action of $G$, the center $\mathfrak{g}_0$ corresponds to trivial representations, while each simple ideal $\mathfrak{g}_j$ carries as a distinct irreducible representation space. By this mutually non equivalence, for $1\le j \le d$, the irrep $\xi_j$ on $\mathfrak g_j$ appears with multiplicity exactly $1$ within $\mathfrak{g}$. This multiplicity-free structure guarantees that $\pi_{\xi_j}(O) = P_j(O)$, allowing the multiplicity sums in our general theorem to collapse. Consequently, denoting by $\mathcal P_j(A)$ the \emph{$\mathfrak g_j$-purity} of any Hermitian operator $A\in \mathcal B(\mathcal H_n)$,
\begin{equation}
    \mathcal P_j(A) := \norm{P_j(A)}^2\, ,
\end{equation}
the theorem in Ref. \cite{ragone2024lie} emerges naturally as a straightforward corollary of our Theorem \ref{theo:mean_var_2-design}. This demonstrates that the representation-theoretic approach developed here strictly extends the existing Lie-algebraic theory.

\begin{corollary}\label{cor:rag}
    Under the hypothesis of $2$-design, if $\rho\in i\varphi_\ast(\mathfrak g)$ or $O\in i\varphi_\ast(\mathfrak{g})$, then
    \begin{equation*}
        \mathbb E_{\vb*\theta}[\ell_{\vb*\theta}(\rho,O)] = \tr(P_0(O)P_0(\rho))
    \end{equation*}
    and
    \begin{equation*}
        \mathrm{Var}_{\vb*\theta}[\ell_{\vb*\theta}(\rho, O)] = \sum_{j=1}^d\dfrac{\mathcal P_j(\rho)\mathcal P_j(O)}{\dim \mathfrak{g}_j}\, .
    \end{equation*}
\end{corollary}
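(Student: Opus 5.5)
Without loss of generality I take $O\in i\varphi_\ast(\mathfrak g)$; the case $\rho\in i\varphi_\ast(\mathfrak g)$ is symmetric. Every formula involved, namely Lemma \ref{lemma:mean_beta}, Theorem \ref{theo:mean_var_2-design} and Lemma \ref{lemma:C_zeta_basis}, is symmetric in the roles of $\rho$ and $O$ up to complex conjugation, which does not affect the absolute values. The plan is to read the variance formula of Theorem \ref{theo:mean_var_2-design} sector by sector and show that, for each $\zeta$, either the projection $\pi_\zeta(O)$ vanishes or $\zeta$ is one of the irreps $\xi_j$ occurring with multiplicity one.

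\emph{Step 1: locate $O$ in the decomposition.} The subspace $i\varphi_\ast(\mathfrak g)$, complexified, is $G$-invariant under the adjoint action \eqref{op_action}. By the reductive decomposition \eqref{reduc_decomp} it splits as $\mathbb C\varphi_\ast(\mathfrak g_0)\oplus\bigoplus_{j\ge1}\mathbb C\varphi_\ast(\mathfrak g_j)$. The first summand consists of trivial representations. Each of the others carries an irrep $\xi_j$, and the $\xi_j$ are pairwise inequivalent; this is the input asserted in the text preceding the corollary, which I take as given.

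Hence $O=P_0(O)+\sum_j P_j(O)$, with $P_0(O)$ in the $\tau$-isotypic component and $P_j(O)$ in the $\xi_j$-isotypic component. It follows that $\pi_\tau(O)=P_0(O)$, that $\pi_{\xi_j}(O)=P_j(O)$, and that $\pi_\zeta(O)=0$ for every other $\zeta$. One subtlety: $\xi_j$ may occur with multiplicity larger than one in $\mathcal B(\mathcal H_n)$ even though it occurs once inside $\mathfrak g$. I handle this using the freedom in the multiplicity resolution \eqref{inv_basis}: I choose the basis so that $e_1$ spans exactly the copy $\mathbb C\varphi_\ast(\mathfrak g_j)$. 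Then $O^{(\xi_j,m)}_k=0$ for all $m\ge2$.

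\emph{Step 2: the mean.} By Lemma \ref{lemma:mean_beta}, $\tr(\pi_\tau(\rho)\pi_\tau(O))=\tr(\pi_\tau(\rho)P_0(O))$. Since $\pi_\tau$ is an orthogonal projector (footnote to \eqref{Beta_projection}), this equals $\tr(\pi_\tau(\rho)P_0(O))=\ip{\rho}{P_0(O)}$. This is the stated expression, once $P_0(\rho)$ is read as the projection of $\rho$ onto the relevant invariant copy.

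\emph{Step 3: the variance.} Only the sectors $\zeta=\xi_j$ survive in Corollary \ref{cor:var_basis}. With the basis chosen above, the inner sum over $m$ collapses to the single term $m=1$, so
\[
\mathcal C_{\xi_j}=\frac{1}{\dim\mathfrak g_j}\sum_{k,l}|O_k^{(\xi_j,1)}|^2|\rho_l^{(\xi_j,1)}|^2=\frac{1}{\dim\mathfrak g_j}\norm{P_j(O)}^2\norm{P_j(\rho)}^2 .
\]
Here $\rho^{(\xi_j,1)}$ is the component of $\rho$ along $\mathbb C\varphi_\ast(\mathfrak g_j)$, namely $P_j(\rho)$, and $\dim\xi_j=\dim\mathfrak g_j$. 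Summing over $j$ gives the claimed variance.

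\emph{Main obstacle.} The delicate step is the multiplicity issue in Step 1. One must check that Lemma \ref{lemma:C_zeta_basis} may be applied with a resolution adapted to $\mathfrak g_j$, and that the $P_j$, defined as orthogonal projections onto $\mathbb C\varphi_\ast(\mathfrak g_j)$, coincide with the component along that single copy. Both follow because $\mathbb C\varphi_\ast(\mathfrak g_j)$ is itself an irreducible invariant subspace. I would therefore extend its orthonormal basis to an orthonormal basis of the full $\xi_j$-isotypic component; this is possible because the decomposition is orthogonal. I would then invoke the basis-independence of $\mathcal C_\zeta$ from its intrinsic definition \eqref{psi_comp_mean_beta_sq}.
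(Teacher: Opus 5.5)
Your proposal is correct and follows essentially the paper's own argument: for $O\in i\varphi_\ast(\mathfrak g)$ you get $\pi_\tau(O)=P_0(O)$, $\pi_{\xi_j}(O)=P_j(O)$ and $\pi_\zeta(O)=0$ for all other $\zeta$, so Theorem \ref{theo:mean_var_2-design} and Corollary \ref{cor:var_basis} collapse to the stated mean and variance. Your multiplicity resolution with $\mathcal H_{\xi_j}\otimes e_1=\mathbb C\varphi_\ast(\mathfrak g_j)$ supplies the step the paper only sketches, since the text's multiplicity-one claim holds only inside $\mathfrak g$ and not in all of $\mathcal B(\mathcal H_n)$; this works provided the orthogonal complement in the isotypic component is itself split into irreducible copies identified unitarily with $\mathcal H_{\xi_j}$, rather than given an arbitrary orthonormal extension.
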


Corollary~\ref{cor:rag} shows that the Lie-algebraic framework corresponds to the particular situation in which only the irreducible representations associated with the simple ideals of the dynamical Lie algebra contribute. Our theorem therefore extends this picture to arbitrary representation sectors, which become essential whenever the state and the observable possess components outside the dynamical Lie algebra.

Although Theorem~\ref{theo:mean_var_2-design} provides an exact decomposition of the variance, evaluating the contributions of the representation may not always be practical. It is therefore useful to derive general upper bounds depending only on intrinsic properties of the observable. To advance in this direction, recall the operator square root $|A|:=\sqrt{A^\dagger A}$ and the Schatten $p$-norm
\begin{equation}
    \norm{A}_p = \left(\tr(|A|)^p\right)^{1/p}\, , \ p\in [1,\infty)\, ,
\end{equation}
with $\norm{A}_\infty$ being the highest eigenvalue of $|A|$. In particular, if we diagonalize the observable $O$, then $|O|$ is the diagonal operator whose entries correspond to the absolute values of the respective entries of $O$, and $\norm{O}_2$ coincides with the Hilbert-Schmidt norm $\norm{O}$ we've used so far. The next lemma establishes an upper bound in terms of Schatten norms.

\begin{lemma}\label{lemma:bound_sch_norm}
If $\varphi$ is an irrep, then
    \begin{equation*}
        \langle \beta_{\rho,O}^2\rangle\le \dfrac{\norm{O}_\infty\norm{O}_1}{2^n}\, .
    \end{equation*}
\end{lemma}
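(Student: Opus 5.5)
The plan is to avoid the full second-moment machinery and bound $\widetilde\beta_{\rho,O}(g)^2$ pointwise by a quantity that is \emph{linear} in $\rho^g$. The group average of that quantity can then be computed with Lemma~\ref{lemma:mean_beta}, which involves only the first moment. Since $\rho$ is pure, so is every $\rho^g$; write $\rho^g = |\psi_g\rangle\langle\psi_g|$.

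First, for each $g\in G$ I will bound the square using Hölder's inequality for Schatten norms:
\begin{equation*}
\widetilde\beta_{\rho,O}(g)^2 = |\langle\psi_g|O|\psi_g\rangle|^2 \le \norm{O}_\infty\, |\langle\psi_g|O|\psi_g\rangle| .
\end{equation*}
This holds because $|\langle\psi_g|O|\psi_g\rangle| \le \norm{\rho^g}_1\norm{O}_\infty = \norm{O}_\infty$. Next, diagonalize the Hermitian $O=\sum_i o_i |v_i\rangle\langle v_i|$. The triangle inequality gives $|\langle\psi_g|O|\psi_g\rangle| \le \sum_i |o_i|\,|\langle v_i|\psi_g\rangle|^2 = \tr(\rho^g|O|)$. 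Combining the two estimates,
\begin{equation*}
\langle\beta_{\rho,O}^2\rangle \le \norm{O}_\infty \int_G \tr(\rho^g |O|)\,dg = \norm{O}_\infty\, \langle\beta_{\rho,|O|}\rangle .
\end{equation*}

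Second, I evaluate the remaining average using Lemma~\ref{lemma:mean_beta}, applied to the (positive) observable $|O|$: $\langle\beta_{\rho,|O|}\rangle = \tr(\pi_\tau(\rho)\pi_\tau(|O|))$. To make this explicit I need the trivial isotypic component of $\mathcal B(\mathcal H_n)$ under the conjugation action \eqref{op_action}. The $G$-fixed operators are exactly those commuting with every $\varphi(g)$. Since $\varphi$ is an irrep, Schur's lemma forces them to be scalar, so the fixed subspace is $\mathbb C\mathds 1$ and $\tau$ has multiplicity one. The orthogonal projection onto it is therefore $\pi_\tau(A) = \frac{\tr(A)}{2^n}\mathds 1$. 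This gives
\begin{equation*}
\tr(\pi_\tau(\rho)\pi_\tau(|O|)) = \frac{\tr(\rho)\tr(|O|)}{2^n} = \frac{\norm{O}_1}{2^n},
\end{equation*}
using $\tr\rho=1$. Substituting this into the previous display yields the claim.

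I expect the only step needing care to be the identification of $\pi_\tau$. This is the same Schur's-lemma observation that underlies Proposition~\ref{prop2}, so it should be routine here. The remaining point to check is that the pointwise estimate uses purity of $\rho$, since $\norm{\rho^g}_1=1$ and the eigenvector expansion both rely on it. This is exactly what yields a bound linear in $\norm{O}_1$ rather than quadratic in $\norm{O}_2$.
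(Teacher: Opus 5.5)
Your proof is correct and follows essentially the paper's argument: both bound $\widetilde\beta_{\rho,O}(g)^2$ pointwise by $\norm{O}_\infty\,\widetilde\beta_{\rho,|O|}(g)$ and then evaluate that first moment via Lemma~\ref{lemma:mean_beta} with $\pi_\tau(A)=2^{-n}\tr(A)\mathds 1$. The paper gets the pointwise bound via $|\beta_{\rho,O}|\le\beta_{\rho,|O|}$ and $x^2\le x$ for $\widetilde O=\norm{O}_\infty^{-1}|O|$, whereas you use Hölder and the triangle inequality in the reverse order. One small remark: purity of $\rho$ is not actually needed, since $\norm{\rho^g}_1=1$ and $|\tr(\rho^g O)|\le\tr(\rho^g|O|)$ hold for any density operator.
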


Unlike the previous bounds, which depend explicitly on the representation-theoretic decomposition, Lemma~\ref{lemma:bound_sch_norm} expresses the second moment solely through Schatten norms of the observable. This makes the result directly applicable even when the irreducible decomposition is not explicitly available.

Combining the previous lemma with the expression for the mean immediately yields a representation-independent upper bound on the variance.

\begin{theorem}\label{theo:bound_sch_norm}
    Under the hypothesis of $2$-design, if $\varphi$ is an irrep, then the variance of $\ell_{\vb*\theta}(\rho, O)$ satisfies
    \begin{equation}
        \mathrm{Var}_{\vb*\theta}[\ell_{\vb*\theta}(\rho, O)]\le \dfrac{\norm{O}_\infty\norm{O}_1-\tr(O)^2}{2^n}\, .
    \end{equation}
\end{theorem}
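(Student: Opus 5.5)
The plan is to combine the exact mean from Theorem~\ref{theo:mean_var_2-design} (equivalently Lemma~\ref{lemma:mean_beta}) with the Schatten-norm bound on the second moment from Lemma~\ref{lemma:bound_sch_norm}. The variance is then assembled through \eqref{var_loss}, $\mathrm{Var}_{\vb*\theta}[\ell_{\vb*\theta}(\rho,O)] = \langle\beta_{\rho,O}^2\rangle - \langle\beta_{\rho,O}\rangle^2$. Since $\varphi$ is assumed irreducible, Lemma~\ref{lemma:bound_sch_norm} applies directly and gives $\langle\beta_{\rho,O}^2\rangle\le \norm{O}_\infty\norm{O}_1/2^n$. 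What remains is to compute the squared mean exactly and subtract it.

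For the mean I would use irreducibility through Schur's lemma. The commutant of an irreducible $\varphi(G)$ is $\mathbb C\mathds 1$. Hence the trivial sector of \eqref{op_inv_decomp} is exactly the line spanned by $\mathds 1$, so $\mathfrak m(\tau)=1$. The projector $\pi_\tau$ is then the Hilbert--Schmidt projection onto $\mathds 1/\sqrt{2^n}$, giving
\begin{equation*}
\pi_\tau(A)=\frac{\tr(A)}{2^n}\mathds 1 .
\end{equation*}
Inserting this into Lemma~\ref{lemma:mean_beta} and using $\tr\rho=1$ yields $\langle\beta_{\rho,O}\rangle=\tr(O)/2^n$ (this also recovers Proposition~\ref{prop2} when $\tr O=0$). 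Squaring, and writing everything over the common denominator $2^n$, gives the subtracted term. The last step is to combine this with the second-moment bound.

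The main obstacle is precisely this bookkeeping of the subtracted term. With the Hilbert--Schmidt normalization above, the square of the mean is $\tr(O)^2/2^{2n}$, so Lemma~\ref{lemma:bound_sch_norm} directly delivers
\begin{equation*}
\mathrm{Var}_{\vb*\theta}[\ell_{\vb*\theta}(\rho,O)]\le \frac{\norm{O}_\infty\norm{O}_1-\tr(O)^2/2^n}{2^n}.
\end{equation*}
This is \emph{not} the bound stated in Theorem~\ref{theo:bound_sch_norm}. The statement as worded subtracts $\tr(O)^2/2^n$, which exceeds $\tr(O)^2/2^{2n}$ by a factor of $2^n$, so it is a strictly stronger claim. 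A quick sanity check shows it cannot hold in general. For $O=\mathds 1$ the cost is constant, so the variance is $0$, while the right-hand side of Theorem~\ref{theo:bound_sch_norm} equals $(2^n-4^n)/2^n=1-2^n<0$ for every $n\ge 1$.

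Consequently the plan cannot establish the inequality exactly as worded. I would first re-derive $\pi_\tau$ explicitly, to confirm that no alternative normalization of the trivial sector rescales $\langle\beta_{\rho,O}\rangle$ by $\sqrt{2^n}$. I would then check whether the authors intend $\tr$ in the subtracted term to denote the normalized trace $2^{-n}\tr$. Under that reading, $\tr(O)^2/2^n$ would mean $\tr(O)^2/2^{2n}$ in the paper's usual notation, and the argument above proves it as is. Absent such a convention, what this argument proves is the bound displayed above, and I would flag the stated normalization of the subtracted term as requiring correction rather than claim the stronger version.
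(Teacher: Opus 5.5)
Your argument is the paper's argument. The paper's proof of this theorem is the single line ``it follows from Lemma~\ref{lemma:mean_beta} and Lemma~\ref{lemma:bound_sch_norm}.'' Irreducibility gives $\pi_\tau(A)=\tr(A)\mathds 1/2^n$, hence $\langle\beta_{\rho,O}\rangle=\tr(O)/2^n$. The paper itself uses this same value inside the proof of Lemma~\ref{lemma:bound_sch_norm} and in the $\mathfrak{su}(2^n)$ example.

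Your diagnosis is also correct. Carrying out that combination yields
\begin{equation*}
\mathrm{Var}_{\vb*\theta}[\ell_{\vb*\theta}(\rho,O)]\le\frac{\norm{O}_\infty\norm{O}_1-\tr(O)^2/2^n}{2^n},
\end{equation*}
so the printed numerator $\norm{O}_\infty\norm{O}_1-\tr(O)^2$ is an error. Your example $O=\mathds 1$ refutes it exactly: the right-hand side is $1-2^n<0$, while the variance is $0$. The error does no damage in the paper, since its only use of the theorem is for the traceless ANNNI Hamiltonian.

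One small correction to your fallback reading. Interpreting $\tr$ in the subtracted term as $2^{-n}\tr$ would turn $\tr(O)^2/2^n$ into $\tr(O)^2/2^{3n}$, not $\tr(O)^2/2^{2n}$. The paper also uses the unnormalized trace throughout. So the right repair is simply the bound you displayed.
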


Theorem~\ref{theo:bound_sch_norm} complements our representation-theoretic analysis by providing a universal estimate that depends only on spectral properties of the observable. Although generally less informative than the exact decomposition of Theorem~\ref{theo:mean_var_2-design}, it offers a simple analytical bound that can be applied with less knowledge of the underlying representation structure than the required by Theorem \ref{theo:mean_var_2-design}.

\begin{remark}
The bound in Theorem~\ref{theo:bound_sch_norm} is expressed in terms of the Schatten $1$- and $\infty$-norms because this yields the sharpest estimate obtained through our approach. If desired, the bound can be reformulated using other Schatten norms by invoking standard norm inequalities, for example
\begin{equation}\label{ineq_norms}
    \norm{O}_1
    \le
    2^{n/2}\norm{O}_2
    \le
    2^n\norm{O}_\infty,
\end{equation}
or any other equivalence relation between norms. We nevertheless retain the formulation of Theorem~\ref{theo:bound_sch_norm} since it preserves the strongest estimate while allowing users to adapt the result to the norm most convenient for a given application.
\end{remark}

\begin{figure*}[t!] 
    \centering
        \includegraphics[width=\textwidth]{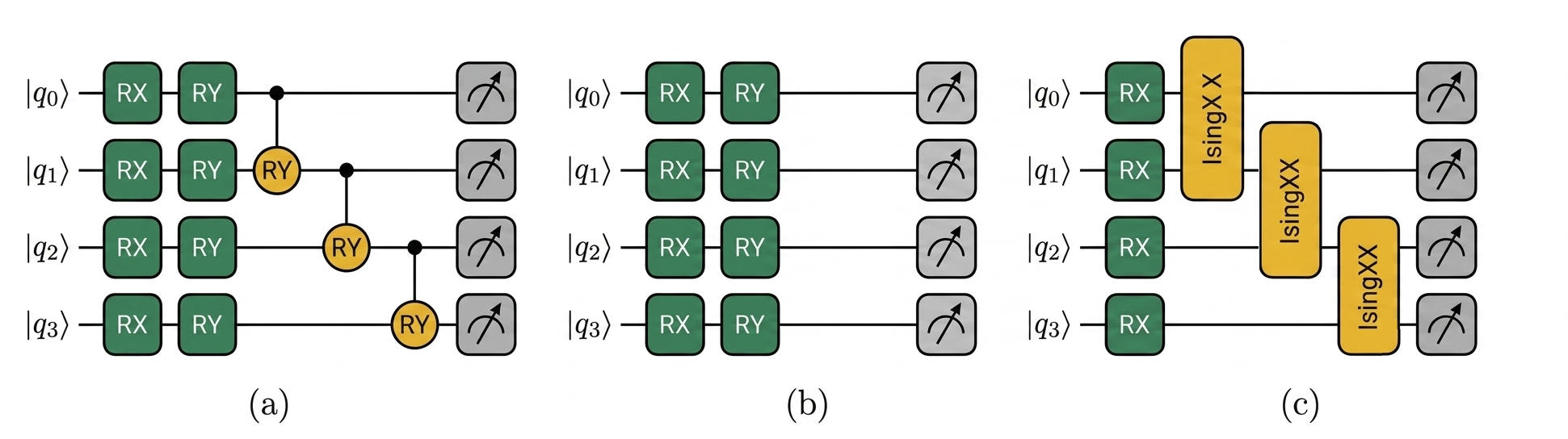}
    \caption{Parameterized building blocks for quantum circuit architectures (ansätze) used for the 1D ANNNI model energy minimization. In general, a number of layers of each building block are combined. (a) Hardware-Efficient Ansatz (HEA) with full entangling capabilities, generating the DLA $\mathfrak{su}(2^n)$. (b) Local separable ansatz with independent single-qubit operations, generating the DLA $\mathfrak{su}(2)^n$. (c) Structured ansatz featuring nearest-neighbor interactions, generating the DLA $\mathfrak{u}(1)^{2n-1}$.}
    \label{fig:tres_figuras_topo}
\end{figure*}

\section{Trainability Regimes in Variational Quantum Circuits}\label{sec:applications}
The representation-theoretic framework developed in the previous section provides a unified analytical description of variational quantum landscapes in terms of their decomposition into irreducible representation channels. Beyond recovering existing Lie-algebraic results, it enables the analytical characterization of optimization landscapes that lie outside the scope of previous approaches. In this section, we illustrate the framework through three representative variational ansatze whose dynamical Lie algebras span markedly different symmetry classes, ranging from fully expressive circuits to highly constrained architectures. Together, these examples demonstrate how the symmetry of the variational ansatz governs the structure of the landscape and, consequently, its trainability. Throughout this section, $X_j$, $Y_j$ and $Z_j$ denote, respectively the Pauli operators $\sigma_1$, $\sigma_2$ and $\sigma_3$ acting on the $j$-th qubit.

As a benchmark, we consider the variational minimization of the one-dimensional axial next-nearest-neighbor Ising (ANNNI) Hamiltonian \cite{selke1988annni,Suzuki,canabarro2019unveiling,morais2026distinguishing}
\begin{equation}\label{ANNNI_ham}
    H = -\sum_{j=1}^{n-1}Z_jZ_{j+1}
    +\kappa\sum_{j=1}^{n-2}Z_jZ_{j+2}
    -\gamma\sum_{j=1}^{n}X_j,
\end{equation}
where $\kappa,\gamma\ge0$. The ANNNI model provides an ideal testbed because it combines local fields with competing nearest and next-nearest-neighbor interactions, allowing different irreducible representation channels to contribute to the landscape. 

To highlight the role played by the initial state independently of the circuit symmetry, we consider the completely separable stabilizer state
\begin{equation}\label{state_s}
\begin{cases}
\ket{\psi_s}=\ket0^{\otimes n},\\[1ex]
\rho_s=\ketbra{\psi_s}{\psi_s}
=\dfrac1{2^n}\left(\mathds1+\sigma_3\right)^{\otimes n}
\end{cases},
\end{equation}
the completely separable non-stabilizer state
\begin{equation}\label{state_s2}
\begin{cases}
\ket{\psi_m}=\dfrac{1}{2^{n/2}}(\ket{0}+e^{i\pi/4}\ket{1})^{\otimes n},\\[1ex]
\rho_m=\ketbra{\psi_m}{\psi_m}
=\dfrac1{2^n}\left(\mathds1+\dfrac{1}{\sqrt{2}}(\sigma_1+\sigma_2)\right)^{\otimes n}
\end{cases},
\end{equation}
and the paradigmatic entangled Greenberger--Horne--Zeilinger (GHZ) state
\begin{equation}\label{state_e}
\begin{cases}
\ket{\psi_e}
=\dfrac1{\sqrt2}
\left(\ket0^{\otimes n}+\ket1^{\otimes n}\right),\\[1ex]
\rho_e
=\dfrac1{2^{n+1}}
\Big(
(\mathds1+\sigma_3)^{\otimes n}
+
(\mathds1-\sigma_3)^{\otimes n}\\
\hspace{7em}
+
(\sigma_1+i\sigma_2)^{\otimes n}
+
(\sigma_1-i\sigma_2)^{\otimes n}
\Big)
\end{cases}.
\end{equation}

Before analyzing each ansatz individually, it is useful to establish a general benchmark. First, we have that $\mathrm{tr}(H)=0$. In addition, the Hilbert-Schmidt norm of the ANNNI Hamiltonian is
\begin{equation}\label{2-norm_H}
\norm{H}
=
2^{n/2}
\sqrt{(1+\kappa^2+\gamma^2)n-1-2\kappa^2},
\end{equation}
while
\begin{equation}\label{infty-norm_H}
\norm{H}_{\infty}
\le
n(1+\kappa+\gamma)-(1+2\kappa).
\end{equation}
Therefore, under the $2$-design assumption, if the ansatz induces an irreducible representation, Theorem \ref{theo:mean_var_2-design} (or, Proposition \ref{prop2} for that matter) gives
\begin{equation}
    \mathbb E_{\vb*\theta}[\ell_{\vb*\theta}(\rho, H)] = \tr(\pi_\tau(H)\pi_\tau(\rho)) = 0
\end{equation}
for any initial state $\rho$. Additionally, under the same hypothesis, Theorem \ref{theo:bound_sch_norm} assumes the form
\begin{equation}
\begin{aligned}
    \mathrm{Var}_{\boldsymbol\theta}[\ell_{\boldsymbol\theta}(\rho,H)]\le \dfrac{\norm{H}_\infty \norm{H}_1-\tr(H)^2}{2^n}
\end{aligned}
\end{equation}
Now, we can use \eqref{ineq_norms} to substitute $\norm{H}_1$ by $\norm{H}_2$, so that $\tr(H) = 0$ and \eqref{2-norm_H}-\eqref{infty-norm_H} yields
\begin{equation}
\begin{aligned}
    \mathrm{Var}_{\boldsymbol\theta}[\ell_{\boldsymbol\theta}(\rho,H)] & \le \dfrac{\norm{H}_\infty\norm{H}}{2^{n/2}} \in \mathcal O(n^{3/2})\, ,
\end{aligned}
\end{equation}
an upper bound with power scaling of degree $3/2$.

The three circuit architectures considered below illustrate three qualitatively distinct trainability regimes (see Fig. \ref{fig:tres_figuras_topo}):

\begin{itemize}
\item[(i)] \textbf{Fully expressive circuits}, whose DLA is $\mathfrak{su}(2^n)$ and whose landscapes exhibit the familiar exponentially vanishing variance associated with barren plateaus.

\item[(ii)] \textbf{Local product circuits}, generated by $\mathfrak{su}(2)^n$, whose restricted symmetry prevents the concentration of measure and yields an extensive variance.

\item[(iii)] \textbf{Commuting circuits}, generated by $\mathfrak{u}(1)^{2n-1}$, for which the landscape retains finite irreducible contributions, leading to a nonvanishing lower bound on the variance.
\end{itemize}

Together, these examples demonstrate that the trainability of a variational quantum algorithm is governed not simply by the expressibility of the circuit, but by the representation-theoretic structure of its symmetry group.

\subsection{Fully expressive circuits: $\mathfrak{su}(2^n)$}

We begin with the maximally expressive circuit architecture, whose dynamical Lie algebra is $\mathfrak{su}(2^n)$. This example serves as a benchmark for our formalism, since it corresponds to the fully controllable setting previously analyzed within the Lie-algebraic approach~\cite{ragone2024lie}. Consequently, our representation-theoretic framework must reproduce the known Haar-random behavior while illustrating how it emerges naturally from the decomposition into irreducible representation channels.

Let $\varphi$ be the defining representation of $SU(2^n)$ on $\mathcal H_n$ generated by the circuit pictured in Figure \ref{fig:tres_figuras_topo}(a), so that $\varphi$ is irreducible and $|0\rangle^{\otimes n}$ is a highest weight vector \footnote{That corresponds to the DLA $\mathfrak{a}_k$, with $k=12,17,18,19,21,22$, in \cite{wiersema2023classification}.}. Then we have the natural decomposition
\begin{equation}\label{su(2^n)-op_decomp}
    \mathcal B(\mathcal H_n) = \mathrm{span}\{\mathds 1\}\oplus \mathfrak{sl}(2^n)\, ,
\end{equation}
where $\mathfrak{sl}(2^n)$ is the space of traceless operators, which happens to be the complexification of the Lie algebra $\mathfrak{su}(2^n)$. Since $\mathfrak{su}(2^n)$ is simple, the adjoint action of $SU(2^n)$ on $\mathfrak{su}(2^n)$ is an irreducible \emph{real} representation, so the complexification of this representation is irreducible as well. That is to say, \eqref{su(2^n)-op_decomp} realizes the invariant decomposition \eqref{op_inv_decomp}, where $\mathrm{span}\{\mathds 1\}$ is identified with the trivial representation $\mathcal H_\tau$, and $\mathfrak{sl}(2^n)$ is $\mathcal H_\lambda$, for $\lambda$ being the complex adjoint representation \footnote{Note that $m(\tau) = 1$ was already expected since $\varphi$ is irreducible.}.

Thus, for any operator $A\in \mathcal B(\mathcal H_n)$, we have
\begin{equation}
    A = \pi_\tau(A)+\pi_\lambda(A)\, ,
\end{equation}
where
\begin{equation}
    \pi_\tau(A) = \dfrac{\tr(A)}{2^n}\mathds 1\, ,
\end{equation}
so that
\begin{equation}\label{norm_pi_lambda}
\begin{aligned}
    \norm{\pi_\lambda(A)}^2 & = \norm{A}^2-\norm{\pi_\tau(A)}^2\\
    & = \tr(A^\dagger A)-\dfrac{|\tr(A)|^2}{2^n}\, .
\end{aligned}
\end{equation}
In particular, for any initial pure state $\rho$ and observable $O$, the mean given by Theorem \ref{theo:mean_var_2-design} assumes the form
\begin{equation}
    \mathbb E_{\vb*\theta}[\ell_{\vb*\theta}(\rho, O)] = \dfrac{1}{2^n}\tr(O)\, ,
\end{equation}
whereas the variance, by Lemma \ref{lemma:C_zeta_natural}, is given by
\begin{equation}
    \mathrm{Var}_{\vb*\theta}[\ell_{\vb*\theta}(\rho, O)] = \dfrac{1}{4^n-1}\norm{\pi_\lambda(\rho)}^2\norm{\pi_\lambda(O)}^2\, .
\end{equation}
From \eqref{norm_pi_lambda}, we get
\begin{equation}
\mathrm{Var}_{\vb*\theta}[\ell_{\vb*\theta}(\rho, O)] = \frac{2^n - 1}{2^n(4^n - 1)} \left( \tr(O^2) - \frac{\tr(O)^2}{2^n} \right)\, ,
\end{equation}
where we have used that $\rho$ is pure, that is, $\rho^2=\rho$. With this, if we assume that $\tr(O) = 0$, then
\begin{equation}
\begin{aligned}
    & \mathrm{Var}_{\vb*\theta}[\ell_{\vb*\theta}(\rho, O)] = \dfrac{2^n-1}{2^n(4^n-1)}\tr(O^2)\, .
\end{aligned}
\end{equation}
which means that a barren plateau occurs whenever the squared norm of the observable is of order $\mathcal O(a^n)$ with $0<a<4$ because
\begin{equation}
    \lim_{n\to\infty}\dfrac{2^n-1}{2^n} = 1\, .
\end{equation}

For $H$ as in \eqref{ANNNI_ham}, we indeed have $\tr(H) = 0$, and $\tr(H^2)$ is given by \eqref{2-norm_H}, so
\begin{equation}
\begin{aligned}
    \mathrm{Var}_{\vb*\theta}[\ell_{\vb*\theta}(\rho,H)] & = \dfrac{2^n-1}{4^n-1}\left((1+\kappa^2+\gamma^2)n-2\kappa^2-1\right)\\
    & \sim (1+\kappa^2+\gamma^2)\dfrac{n}{2^n}
\end{aligned}
\end{equation}
for any initial pure state $\rho$, which confirms the already expected untrainability of this ansatz.

\subsection{Local product circuits: $\mathfrak{su}(2)^n$}

We now turn to the opposite extreme, namely a circuit architecture that is incapable of generating entanglement. Although considerably less expressive than the previous example, this ansatz provides an illuminating case study because its symmetry fragments the variational landscape into many independent irreducible representation channels. As we shall see, this fragmentation prevents the exponential concentration responsible for barren plateaus.

The relevant symmetry group is $SU(2)^n$, acting independently on each qubit, with $\varphi$ being the representation of $SU(2)^n$ on $\mathcal H_n$ given by
\begin{equation}
    \varphi(g_1,...,g_n) = g_1\otimes...\otimes g_n\, ,
\end{equation}
which is generated by the separable ansatz pictured in Figure \ref{fig:tres_figuras_topo}(b) \footnote{That is generated by the DLA $\mathfrak{b}_3$ in \cite{wiersema2023classification}.}. This is an irrep with $\ket0^{\otimes n}$ as a highest weight vector. The irreps occurring in $\mathcal B(\mathcal H_n)$ are identified by subsets of $[n]:=\{1,...,n\}$ such that for any $S\subset [n]$, the set of Pauli strings
\begin{equation}
    \mathcal P_S = \{2^{-n/2}\sigma_{j_1}\otimes...\otimes \sigma_{j_n}:j_k\ne 0 \iff k \in S\},
\end{equation}
form an orthornomal basis of the space $\mathcal H_S$ carrying a $3^{|S|}$-dimensional irrep of $SU(2)^n$, where distinct subsets give nonequivalent representations. With this, the decomposition \eqref{op_inv_decomp} assumes the form
\begin{equation}
    \mathcal B(\mathcal H_n) = \bigoplus_{S\subset [n]}\mathcal H_{S}\, ,
\end{equation}
where the trivial representation corresponds to the empty set, and
\begin{equation}
    \pi_{S}(A) = \sum_{\sigma \in \mathcal P_S}\dfrac{\tr(\sigma A)}{2^n}\sigma\, .
\end{equation}
Hence
\begin{equation}
    \mathcal C_S(\rho,O) = \dfrac{1}{4^n3^{|S|}}\left(\sum_{\sigma \in \mathcal P_S}\tr(\sigma \rho)^2\right)\left(\sum_{\sigma \in \mathcal P_S}\tr(\sigma O)^2\right)\, .
\end{equation}
In particular, for $H$ as in \eqref{ANNNI_ham}, we have
\begin{equation}
    \sum_{\sigma \in \mathcal P_S}\tr(\sigma H)^2 = \begin{cases}
         4^n\gamma^2\hspace{1.5 em} \mbox{if} \hspace{1 em} |S|=1\\
         4^n \hspace{2.6 em} \mbox{if} \hspace{1 em} S = \{a,a+1\}\\
         4^n\kappa^2 \hspace{1.5 em} \mbox{if} \hspace{1 em} S = \{a,a+2\}\\
         0 \hspace{3.2 em} \mbox{otherwise}
    \end{cases}.
\end{equation}
Now, note that every separable state has the same orbit by $SU(2)^n$. Thus, for $\rho_s$ as in \eqref{state_s} and any separable state $\rho$, we have
\begin{equation}
    \mathcal C_\zeta(\rho, H) = \mathcal C_\zeta(\rho_s, H)\, .
\end{equation}
Explicitly, for $\rho_s$ of \eqref{state_s} and the entangled state $\rho_e$ of $\eqref{state_e}$, it is just a matter of straightforward calculation to get
\begin{equation}
    \sum_{\sigma \in \mathcal P_S}\tr(\sigma \rho_s)^2 = 1 \hspace{2 em} \forall\, S \subset [n]\, ,
\end{equation}
\begin{equation}
    \sum_{\sigma \in \mathcal P_S}\tr(\sigma \rho_e)^2 =\begin{cases}
        0 \hspace{1.5 em} \mbox{if} \hspace{1 em} |S| = 1\\
        1 \hspace{1.5 em} \mbox{if} \hspace{1 em} |S| = 2
    \end{cases}.
\end{equation}
Then, any separable state $\rho$ satisfies
\begin{equation}
\begin{aligned}
    \mathcal C_S(\rho,  H) = \begin{cases}
        \gamma^2/3 \hspace{1.5 em} \mbox{if} \hspace{1 em} |S| = 1\\
        1/9\hspace{2.1 em} \mbox{if} \hspace{1 em} S = \{a,a+1\}\\
        \kappa^2/9 \hspace{1.5 em} \mbox{if} \hspace{1 em} S = \{a,a+2\}\\
        0 \hspace{3.2 em} \mbox{otherwise}
    \end{cases},
\end{aligned}
\end{equation}
whereas the state $\rho_e$ gives
\begin{equation}
    \mathcal C_S(\rho_e,  H) = \begin{cases}
        1/9\hspace{2.1 em} \mbox{if} \hspace{1 em} S = \{a,a+1\}\\
        \kappa^2/9 \hspace{1.5 em} \mbox{if} \hspace{1 em} S = \{a,a+2\}\\
        0 \hspace{3.2 em} \mbox{otherwise}
    \end{cases}.
\end{equation}
Summing up,
\begin{equation}
\begin{aligned}
    \mathrm{Var}_{\vb*\theta}[\ell_{\vb*\theta}(\rho,  H)] = \dfrac{1+\kappa^2+3\gamma^2}{9}\,n-\dfrac{1+2\kappa^2}{9}
\end{aligned}
\end{equation}
for any separable state $\rho$, including the stabilizer state $\rho_s$ and the non-stabilizer state $\rho_m$ of \eqref{state_s}-\eqref{state_s2}, and
\begin{equation}
    \mathrm{Var}_{\vb*\theta}[\ell_{\vb*\theta}(\rho_e,  H)] = \dfrac{1+\kappa^2}{9}\,n-\dfrac{1+2\kappa^2}{9}
\end{equation}
for the maximally entangled state $\rho_e$ of \eqref{state_e}. Interestingly, the variance for an initial GHZ state is independent of the $\gamma$ parameter representing the strength of the transverse field in the ANNNI model. Moreover, between a separable state $\rho$ and the maximally entangled state $\rho_e$, there is no sensible advantage, in terms of trainability, captured by the variances of the cost functions, as both variances scale linearly.

\subsection{Commuting circuits: $\mathfrak{u}(1)^{2n-1}$}

Our final example illustrates the opposite limit of expressibility. Here the dynamical Lie algebra is Abelian, leading to a highly fragmented representation structure in which every irreducible representation is one-dimensional. From this fragmentation, the representation-theoretic framework allows the trainability to be established analytically by identifying a single irreducible sector that necessarily contributes to the landscape variance.

Consider the representation of $\mathfrak u(1)^{2n-1}$ produced by generators $\{X_jX_{j+1},X_j, X_{j+1}:1\le j \le n-1\}$, which is generated by the ansatz pictured in Figure \ref{fig:tres_figuras_topo}(c) \footnote{That is the DLA $\mathfrak{b}_1$ in the classification of \cite{wiersema2023classification}.}. This representation integrates to a unitary representation $\varphi$ of the torus $U(1)^{2n-1}$ on $\mathcal H_n$, and every $\zeta \in \mathfrak{CG}(\varphi)$ is unidimensional.

Each $X_j$ spans a trivial irrep, so the variance for $H$ depends only on the terms carrying interactions between spins. According to Corollary \ref{cor:var_ge_C}, a lower bound for the total variance can be established by evaluating the contribution of any single non-trivial irreducible representation. We can therefore restrict our analysis to a single interaction term from the ANNNI Hamiltonian, such as $Z_1Z_2$, to isolate one such representation. To systematically find its irreducible components, we must decompose this term into operators that act as eigenvectors under the adjoint action of the local generators. With this in mind, note that $Z_1Z_2$ can be written as a linear combination of operators
\begin{equation}
    A_{\epsilon_1,\epsilon_2,\epsilon_3} = \dfrac{1}{2^{(n+3)/2}}(Y_1+\epsilon_1iZ_1)(Y_2+\epsilon_2iZ_2)(\mathds 1+\epsilon_3  X_3)\,,
\end{equation}
for $\epsilon_1,\epsilon_2,\epsilon_3 \in\{-1,1\}$. By computing the commutator of $A_{\epsilon_1,\epsilon_2,\epsilon_3}$ with the generators of the ansatz, one can verify that $A_{\epsilon_1,\epsilon_2,\epsilon_3}$ spans a subspace carrying an irrep whose class is determined by the signs of $\epsilon_1,\epsilon_2,\epsilon_3$. We take $\zeta_0$ to be the irrep identified by $\epsilon_1=\epsilon_2=\epsilon_3 = 1$.

For any Pauli string $\sigma\ne Z_1Z_2$ appearing in the expansion of $H$, we have either $[X_1,\sigma] = 0$ or $[X_2,\sigma]= 0$, however, 
\begin{equation}
    [X_1,A_{1,1,1}] = [X_2,A_{1,1,1}] = 2A_{1,1,1}\, .
\end{equation}
Then
\begin{equation}
    \pi_{\zeta_0}(H) = 2^{(n-3)/2}A_{1,1,1} \, .
\end{equation}
An analogous argument yields
\begin{equation}
    \pi_{\zeta_0}(\rho_s) = -\dfrac{1}{2^{(n+3)/2}}A_{1,1,1}\, .
\end{equation}
Although both $\pi_{\zeta_0}(\rho_m)$ and $\pi_{\zeta_0}(\rho_e)$ have nonzero components orthogonal to $A_{1,1,1}$, for example in the direction of $A_{1,1,1}X_n$, we invoke Lemma \ref{lemma:C_zeta_basis} to conclude they don't matter: setting $A_{1,1,1}$ to be $u(\zeta_0,1;1)$, it follows that
\begin{equation}
    (\rho_m)^{(\zeta_0,1)}_1 = \dfrac{1+\sqrt{2}}{2^{n/2+3}}\, ,
\end{equation}
\begin{equation}
    \ (\rho_e)^{(\zeta_0,1)}_1 = -\dfrac{1}{2^{(n+3)/2}}\, ,
\end{equation}
and, for $j> 1$, $H^{(\zeta_0,j)}_1 = 0$. This is sufficient to conclude that
\begin{equation}
    \mathcal C_{\zeta_0}(\rho_s,H) = \mathcal C_{\zeta_0}(\rho_e,H) = \dfrac{1}{2^6}\, ,
\end{equation}
\begin{equation}
    \mathcal C_{\zeta_0}(\rho_m,H) = \dfrac{3+2\sqrt{2}}{2^9}\, ,
\end{equation}
by Lemma \ref{lemma:C_zeta_basis}. From Corollary \ref{cor:var_ge_C}, we get
\begin{equation}
    \mathrm{Var}_{\vb*\theta}[\ell_{\vb*\theta}(\rho, H)] \ge \mathcal C_{\zeta_0}(\rho, H) = \dfrac{1}{2^6}
\end{equation}
for $\rho$ being either the separable stabilizer state \eqref{state_s} or the maximally entangled state \eqref{state_e}, and
\begin{equation}
    \mathrm{Var}_{\vb*\theta}[\ell_{\vb*\theta}(\rho_m, H)] \ge \mathcal C_{\zeta_0}(\rho_m, H) = \dfrac{3+2\sqrt{2}}{2^9}\, ,
\end{equation}
where $\rho_m$ is the separable non-stabilizer state given by \eqref{state_s2}.

\begin{figure*}[!t]
    \centering
    \includegraphics[width=\textwidth]{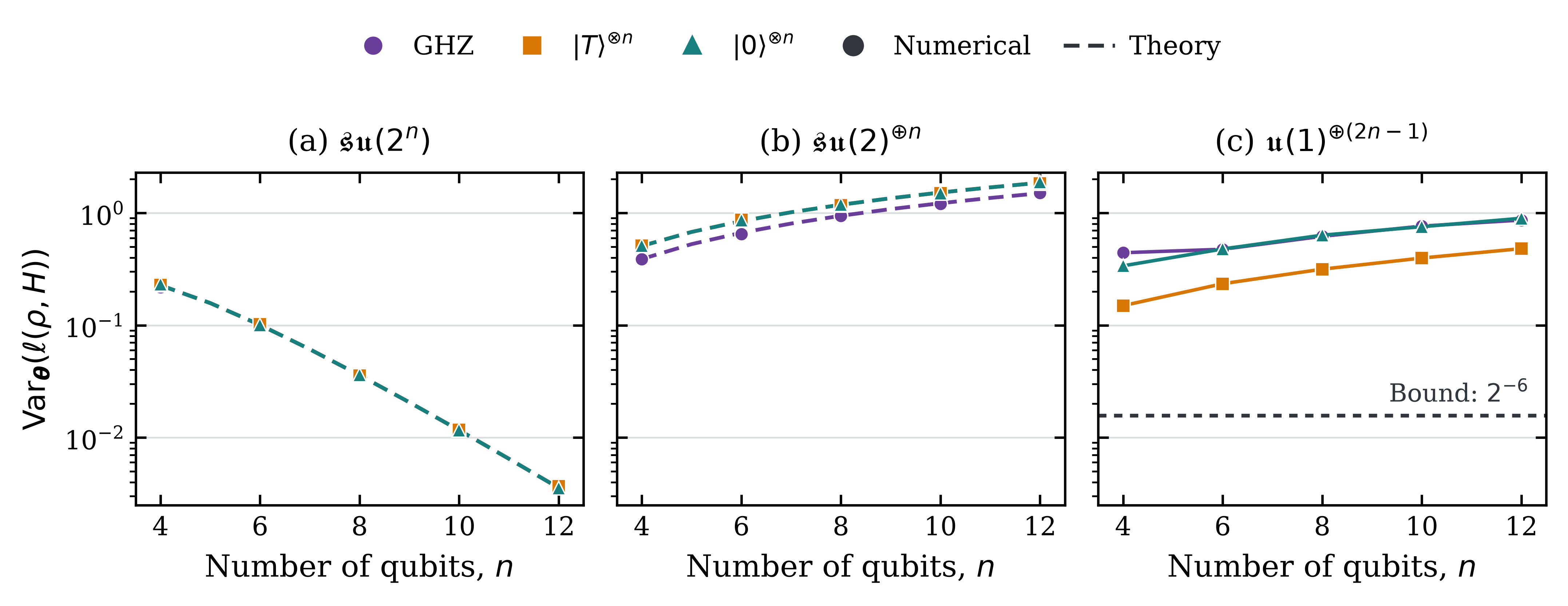}
    \caption{\textbf{Cost-function variance as a function of system size for different ansatz architectures.} Comparison between numerical estimates (markers, sampled over $10^4$ independent realizations) and analytical predictions/bounds (dashed and dotted lines) for initial separable states ($|0\rangle^{\otimes n}$, $|T\rangle^{\otimes n}$) and the maximally entangled state ($|\mathrm{GHZ}\rangle$). 
(a) Fully expressive $\mathfrak{su}(2^n)$ ansatz, exhibiting exponential concentration (barren plateau) independent of the initial state. 
(b) Local $\mathfrak{su}(2)^{n}$ architecture, demonstrating extensive scaling, where separable states and the GHZ state split according to their populated representation sectors. 
(c) $\mathfrak u(1)^{2n-1}$ architecture, showing non-vanishing scaling that remains strictly above the analytical lower bound (dotted line).}
    \label{fig:numericalvar}
\end{figure*}
\subsection{Numerical validation and optimization landscapes}

The analytical results above predict qualitatively distinct trainability regimes for the three circuit architectures. In particular, the variance of the cost function is exponentially suppressed for the fully expressive $\mathfrak{su}(2^n)$ architecture, grows extensively for the local $\mathfrak{su}(2)^n$ architecture, and remains bounded from below for the highly constrained $\mathfrak{u}(1)^{2n-1}$ architecture. We now test these predictions numerically and, in a second set of simulations, investigate whether the corresponding variance scaling is reflected in the actual optimization performance.

\paragraph{Variance scaling.}

We first compare the analytical expressions derived above with direct numerical estimates of the cost-function variance. For each number of qubits and circuit architecture, we sampled $10^4$ independent circuit realizations and evaluated $\mathrm{Var}_{\boldsymbol\theta} [\ell_{\boldsymbol\theta}(\rho,H)]$ across the resulting ensemble. The simulations were performed using the PennyLane library~\cite{bergholm2018pennylane,kwak2021introduction}. We considered both the separable initial states $\rho_s$ and $\rho_m$, as well as the maximally entangled GHZ state $\rho_e$, whenever the analytical results distinguish between them.

Figure~\ref{fig:numericalvar} compares these numerical estimates with the analytical predictions and bounds obtained from our representation-theoretic decomposition. Across all three architectures, the agreement is excellent. For the fully expressive $\mathfrak{su}(2^n)$ ansatz, the numerical variance follows the predicted exponential suppression (Fig.~\ref{fig:numericalvar}(a)), providing a direct signature of a barren plateau. In contrast, the $\mathfrak{su}(2)^n$ architecture exhibits the predicted extensive scaling (Fig.~\ref{fig:numericalvar}(b)), where the separable and GHZ states differ only through the representation sectors in which their operator content has support. This non-vanishing behavior is particularly notable, as cost-function variances previously reported in the literature decay with system size. Finally, for the $\mathfrak{u}(1)^{2n-1}$ architecture (Fig.~\ref{fig:numericalvar}(c)), the numerical variance increases linearly and remains above the analytical lower bound, as required by Corollary~\ref{cor:var_ge_C}.

These results provide direct numerical validation of the central prediction of our framework: the scaling of landscape fluctuations is determined by the irreducible representation sectors jointly populated by the initial state and the cost observable. Importantly, the three architectures exhibit qualitatively and quantitatively distinct behaviors despite operating on the same Hamiltonian. The numerical results therefore support the conclusion that the symmetry structure of the ansatz, rather than the Hamiltonian alone, dictates the concentration properties of the variational landscape.

\paragraph{Optimization performance.}

While cost-function variance quantifies the trainability of the landscape, its practical significance is operational: non-vanishing fluctuations must enable an optimizer to reliably navigate the parameter space. To examine this connection, we evaluate the variational minimization of a $10$-qubit ANNNI Hamiltonian with couplings $\kappa = 0.5$ and $\gamma = 1.0$ (ground-state energy $E_0 \approx -11.79$) and initial state $\rho_s$ as in \eqref{state_s}. We track the optimization performance across circuit depths $L \in \{1, \dots, 10\}$ for the same three representative architectures, namely, $\mathfrak{su}(2^n)$, $\mathfrak{su}(2)^n$ and $\mathfrak{u}(1)^{2n-1}$.

First, as shown in Fig.~\ref{fig:optimizationf}(a), the distance to a unitary 2-design for the $\mathfrak{su}(2^n)$ architecture steadily decreases as the number of layers increases, crossing the $\varepsilon = 10^{-3}$ approximate 2-design threshold around $L = 10$. This convergence can be estimated analytically, as demonstrated in Ref.~\cite{ragone2024lie}. The corresponding optimization performance across various circuit depths is presented in Fig.~\ref{fig:optimizationf}(b). To evaluate these trajectories, we employed the parameter-shift rule for 250 steps per optimization, repeating the process 12 times.

For the fully expressive $\mathfrak{su}(2^n)$ architecture, performance degrades significantly as the circuit depth increases beyond $L = 5$. As the circuit grows and the unitaries enter the approximate 2-design regime, the exponential suppression of variance manifests as a severely flattened landscape. This leads to optimization bottlenecks, evidenced by the large standard deviations (shaded regions) across independent runs and the worsening of the final achieved energy at $L = 10$. In contrast, the symmetrically restricted $\mathfrak{su}(2)^n$ and $\mathfrak u(1)^{2n-1}$ architectures do not suffer from this exponential concentration. They retain substantial landscape fluctuations and maintain stable, reliable optimization performance across all evaluated depths without falling into barren plateaus. Consequently, despite the well-known trainability challenges associated with deep variational quantum circuits, constraining the ansatz can yield robust optimizations. Interestingly, although the $\mathfrak{su}(2)^n$ architecture restricts the system entirely to separable states, it still achieves a surprisingly accurate estimate of the ground-state energy.

Comparing the variance scalings with the optimization performance in Fig.~\ref{fig:optimizationf}(b) bridges abstract representation theory and practical trainability. The former establishes our predicted fluctuation bounds, while the latter demonstrates their operational consequence: architectures whose relevant representation sectors preserve non-vanishing variance sustain actionable gradient signals. In contrast, the exponential concentration inherent to fully expressive circuits dictates an inevitable barren plateau as depth increases. Ultimately, these results visually confirm that constraining ansatz expressivity via Lie-algebraic symmetry fundamentally protects trainability in deep quantum circuits.
\begin{figure*}[t!] 
    \centering
   \begin{subfigure}[b]{0.47\textwidth}
        \centering
        \includegraphics[width=\textwidth]{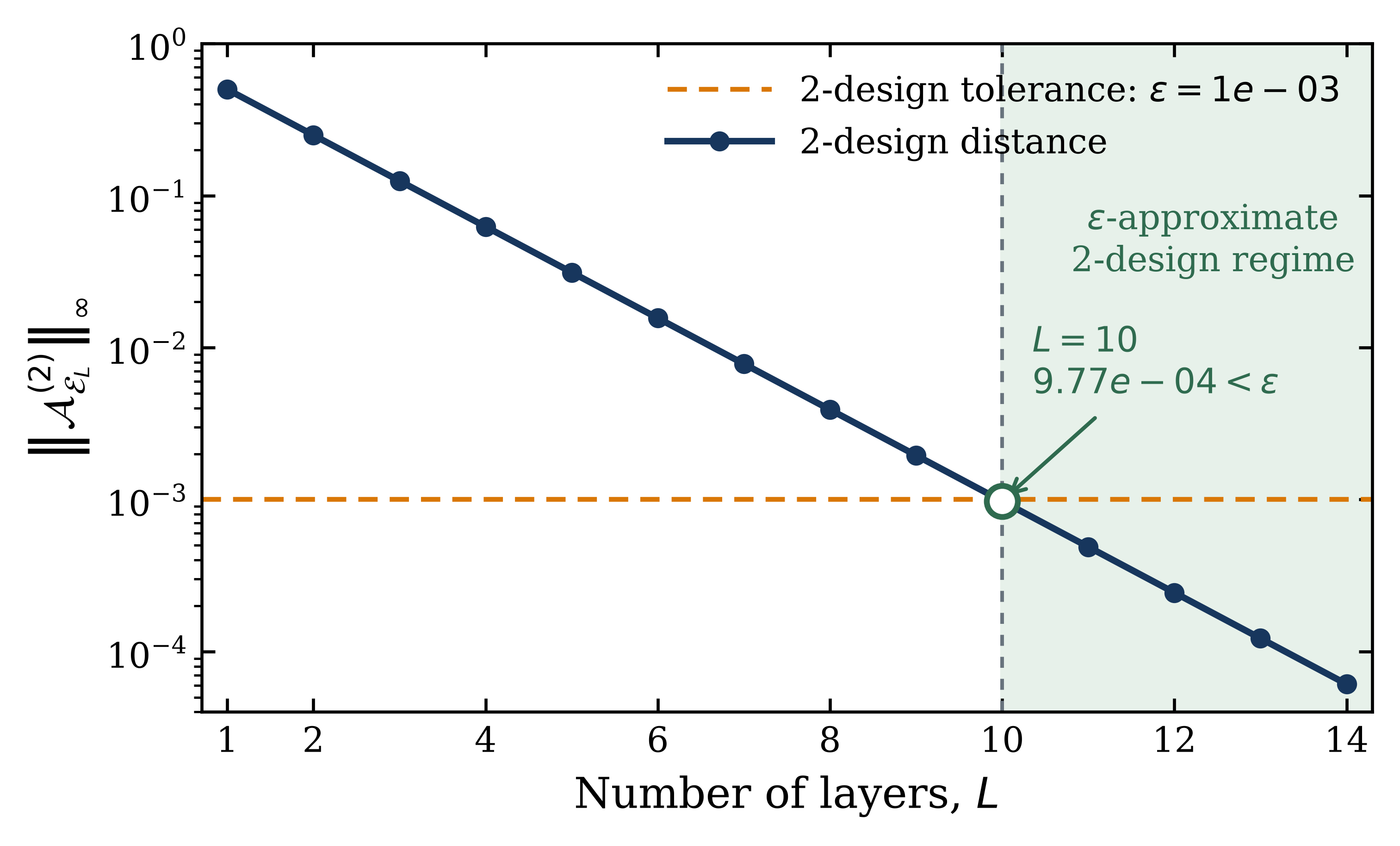}
        \caption{}
        \label{fig:scaling_a}
    \end{subfigure}
    \hfill
    \begin{subfigure}[b]{0.51\textwidth}
        \centering
        \includegraphics[width=\textwidth]{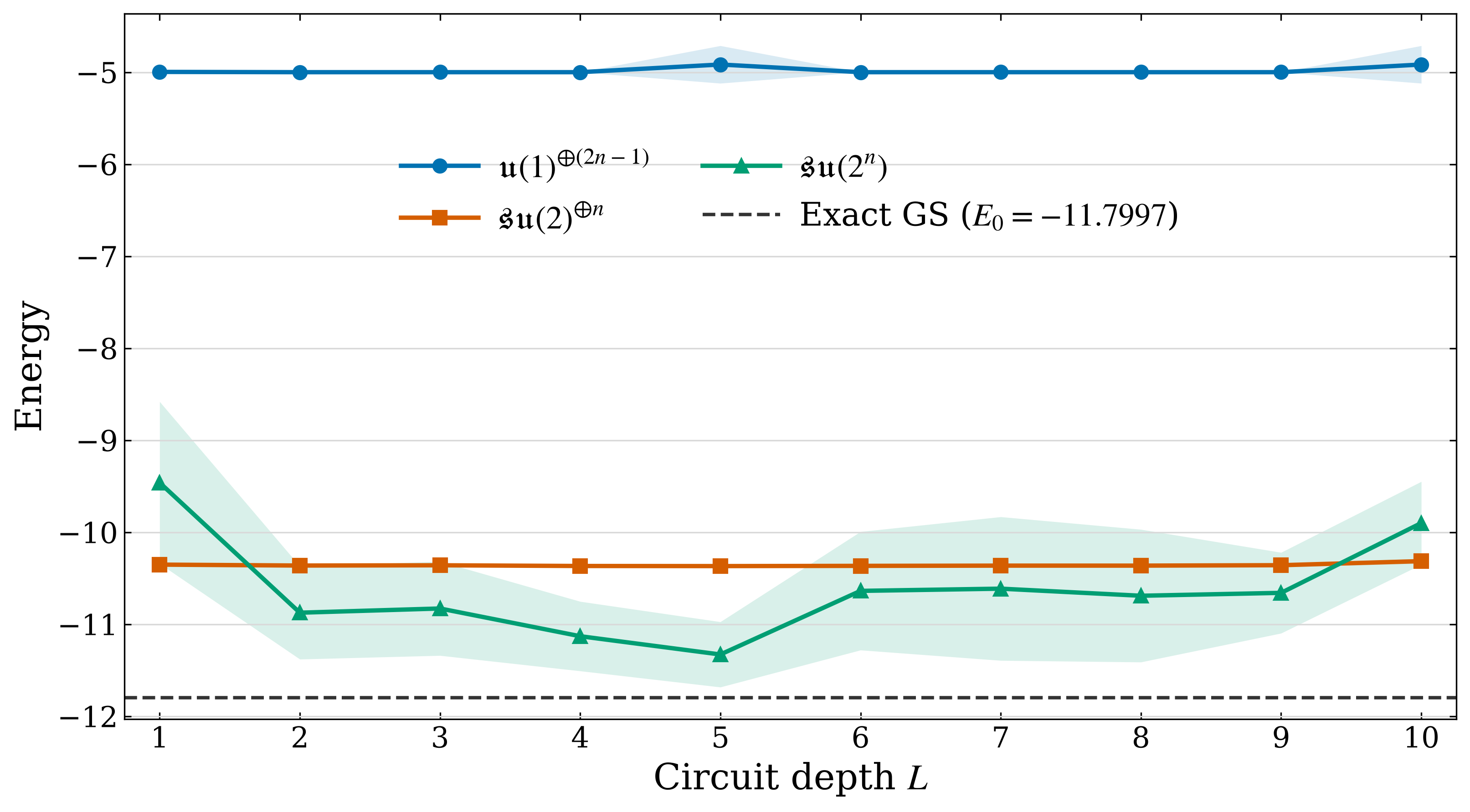}
        \caption{}
        \label{fig:landscape_c}
    \end{subfigure}
    \caption{\textbf{Convergence to a 2-design and optimization performance.} 
(a) Distance to a unitary 2-design ($\mathcal{A}_{\epsilon}^{(2)}$) as a function of the number of layers $L$ for the $SU(2^n)$ architecture. As circuit depth increases, the ansatz ensemble approaches an approximate 2-design regime, crossing the tolerance threshold $\epsilon = 10^{-3}$ around $L = 10$. 
(b) Variational ground-state energy for the 10-qubit ANNNI model ($\kappa=0.5$, $\gamma=1.0$) across circuit depths $L \in \{1, \dots, 10\}$ comparing the three architectures: $\mathfrak{su}(2^n)$, $\mathfrak{su}(2)^n$, and $\mathfrak u(1)^{2n-1}$. The horizontal dashed line denotes the exact ground-state energy ($E_0 \approx -11.7997$). As $\mathfrak{su}(2^n)$ approaches 2-design statistics, exponential concentration leads to optimization bottlenecks, whereas $\mathfrak{su}(2)^n$ and $\mathfrak u(1)^{2n-1}$ retain non-vanishing fluctuations and reliably achieve lower energies.}
    \label{fig:optimizationf}
\end{figure*}

\section{Discussion}

In this work, we developed a representation-theoretic framework for analyzing fluctuations of variational quantum landscapes generated by deep parameterized quantum circuits. By viewing the cost function as a function on the orbit of a compact Lie group, we recast the characterization of landscape fluctuations as a problem in harmonic analysis. The resulting decomposition into irreducible representation channels, together with Schur orthogonality, allows us to derive exact expressions and analytical bounds for the cost-function variance without requiring either the initial state or the observable to belong to the dynamical Lie algebra. This substantially extends the scope of existing Lie-algebraic approaches to barren plateaus.

A central feature of the framework is that it identifies the irreducible representation channels as the fundamental carriers of landscape fluctuations. The contribution of each channel is determined jointly by the overlap of the initial state and the observable with that sector. Our central result, Theorem \ref{theo:mean_var_2-design}, provides exact analytical expressions alongside upper and lower bounds for the cost function variance across arbitrary irreducible representation channels, $\mathcal{C}_{\zeta}(\rho, O)$. The Lie-algebraic formulas of Ref.~\cite{ragone2024lie} are recovered as a special case when the relevant operator content is restricted to the adjoint representation, demonstrating that the established theory is naturally embedded within the more general representation-theoretic description developed here.

The applications of the ANNNI model illustrate how this perspective translates into qualitatively different regimes of trainability. For the fully expressive $\mathfrak{su}(2^n)$ architecture, the variance is exponentially suppressed, recovering the conventional barren plateau. In contrast, the $\mathfrak{su}(2)^n$ architecture exhibits an extensive variance, while the highly constrained $\mathfrak{u}(1)^{2n-1}$ architecture admits a system-size-independent lower bound. These examples show that trainability is not determined solely by the amount of expressibility or entanglement generated by a circuit. Rather, it is controlled by how the cost function and the initial state are distributed across the irreducible representation sectors of the circuit's symmetry group. In this sense, the representation structure provides a direct link between the symmetries of a variational ansatz and the geometry of its optimization landscape.

Our results also place the present framework in the broader development of barren plateau theory. The dynamical Lie algebra approach of Ref.~\cite{ragone2024lie} established exact variance formulas under Lie-algebraic assumptions, while Ref. \cite{fontana2024characterizing} clarified the role of the adjoint representation in this description. Further, Ref. \cite{diaz2023showcasing} subsequently demonstrated that the DLA restriction can be overcome in specific settings by exploiting additional representation structure. The framework developed here takes this progression further by treating the full irreducible representation content of the underlying compact group as the organizing principle. From this perspective, these seemingly different approaches correspond to particular sectors or special cases of a common representation-theoretic description.

Thus, our framework provides the theoretical tools necessary to systematically analyze symmetry-tailored architectures \cite{PRXQuantum.3.030341,PRXQuantum.5.020328}. By explicitly mapping how a restricted group action decomposes into irreducible representation channels, it becomes possible to rigorously evaluate whether such ansatze will scale favorably in practice or if they will succumb to generalized entanglement or locality issues. However, the present analysis relies on the assumption that the variational circuit realizes a unitary $2$-design, which provides the averaging structure required for our exact variance formulas. Although this assumption captures the asymptotic regime of sufficiently deep and expressive circuits, it can be restrictive for finite-depth architectures, where the approach to a $2$-design may itself determine the practical onset of barren plateaus. Extending the present framework to quantify finite-depth corrections and to characterize landscapes away from the $2$-design regime is therefore an important direction for future work.

More broadly, our results suggest that harmonic analysis provides a natural language for studying variational quantum landscapes beyond barren plateaus themselves. Rather than viewing the landscape solely through the algebra of the circuit generators, one can resolve its fluctuations into symmetry-adapted representation channels and ask which sectors survive, dominate, or become suppressed as the system size grows. This perspective opens the possibility of using representation-theoretic structure not only to diagnose barren plateaus, but also to guide the design of variational architectures whose symmetries preserve trainable optimization landscapes.

\acknowledgements
We acknowledge financial support the Simons Foundation (Grant No. 1023171, R.C.), the Brazilian National Council for Scientific and Technological Development (CNPq, Grants No. 403181/2024-0, 301687/2025-0 and 150631/2026-0), the National Institute of Science and Technology for Applied Quantum Computing through CNPq process No. 408884/2024-0, the Financiadora de Estudos e Projetos (Grant No. 1699/24 IIF-FINEP). RC thanks the Technical University of Denmark for its hospitality, where part of this work was carried out during a guest professorship supported by the Otto M\o nsted Foundation. We also thank the High-Performance Computing Center (NPAD) at UFRN for providing computational resources. The authors acknowledge the use of Google Gemini for assistance in styling and rendering the illustrations in Figures 1 and 2. 

\bibliography{references}
\newpage
\appendix

\section{Proofs of results in Section \ref{sec:framework}}\label{sec:proof}

Except for the corollaries, here we restate the results of Section \ref{sec:framework}, but now adding their proofs. We'll repeatedly invoke Schur's Orthogonality Relations, and we refer to \cite{folland} for its statement and proof.

\begin{lemma*}\label{lemma:mean_beta-ap}
\begin{equation*}
    \langle \beta_{\rho, O}\rangle = \ip{\pi_\tau(\rho)}{\pi_\tau(O)} = \tr(\pi_\tau(\rho)\pi_\tau(O))\, .
\end{equation*}
\end{lemma*}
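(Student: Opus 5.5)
We will use the splitting \eqref{Beta_projection} and integrate over $G$ term by term. Since $\widetilde\beta_{\rho,O}(g)=\sum_{\zeta}\ip{\pi_\zeta(\rho)^g}{O}$ is a finite sum and the Haar measure is normalized, linearity gives
\begin{equation*}
\langle\beta_{\rho,O}\rangle=\sum_{\zeta\in\mathfrak{CG}(\varphi)}\ip{\int_G\pi_\zeta(\rho)^g\,dg}{O},
\end{equation*}
where the integral of the operator-valued function $g\mapsto\pi_\zeta(\rho)^g$ is taken componentwise. This exchange is legitimate because the map $A\mapsto\ip{A}{O}$ is linear and continuous on the finite-dimensional space $\mathcal B(\mathcal H_n)$. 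The task is then to compute the averaging operator $P:=\int_G(\,\cdot\,)^g\,dg$ on each isotypic sector $\mathcal H_\zeta\otimes\mathbb C^{\mathfrak m(\zeta)}$.

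The key step is to show that $P$ is the orthogonal projection onto the fixed-point subspace of the $G$-action. This subspace is precisely the trivial isotypic component $\mathcal H_\tau\otimes\mathbb C^{\mathfrak m(\tau)}$, so $P=\pi_\tau$. The argument runs as follows:
\begin{itemize}
\item By left invariance of the Haar measure, $(PA)^h=PA$ for all $h\in G$, so $P$ maps into the fixed subspace.
\item If $A$ is fixed, then clearly $PA=A$.
\item Since the action is unitary for the Hilbert--Schmidt product and the measure is invariant under $g\mapsto g^{-1}$, $P$ is self-adjoint. Hence it is an orthogonal projection.
\item A fixed vector spans a copy of the trivial irrep, so by nonequivalence of the irreps in \eqref{op_inv_decomp} it lies in the $\tau$-isotypic component. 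Conversely, every vector there is fixed.
\end{itemize}
Alternatively, one can argue more directly through Schur's orthogonality relations. For a nontrivial irrep $\zeta$, written in the basis \eqref{inv_basis}, the matrix coefficients of $\zeta$ integrate to zero against the constant function $1$, which is the matrix coefficient of $\tau$. Hence $\int_G\pi_\zeta(\rho)^g\,dg=0$ for $\zeta\ne\tau$, while for $\zeta=\tau$ the integrand is constant and equal to $\pi_\tau(\rho)$. I would present this second version, since the paper announces Schur orthogonality as the tool.

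Combining these, $\langle\beta_{\rho,O}\rangle=\ip{\pi_\tau(\rho)}{O}$. Because $\pi_\tau$ is an orthogonal projector (see the footnote after \eqref{Beta_projection}), this equals $\ip{\pi_\tau(\rho)}{\pi_\tau(O)}=\tr(\pi_\tau(\rho)^\dagger\pi_\tau(O))$. The final equality in the statement then follows because $\rho$ is Hermitian and $\pi_\tau$ commutes with taking adjoints: the action $A\mapsto\varphi(g)A\varphi(g)^\dagger$ commutes with $\dagger$, so $P$ does as well.

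The only mildly delicate point is the vanishing of the nontrivial sectors when multiplicities are present. Schur orthogonality must be applied to the coefficients of each copy $\mathcal H_\zeta\otimes e_j$ separately, noting that $G$ acts trivially on the multiplicity factor. Beyond this, the proof is routine.
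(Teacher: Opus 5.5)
Your proof is correct and takes the paper's route. The paper likewise applies Schur's orthogonality relations to the decomposition \eqref{op_inv_decomp}, so that the Haar average annihilates every nontrivial sector and leaves $\pi_\tau(\rho)$; you fill in the details the paper omits (handling each multiplicity copy separately, orthogonality of $\pi_\tau$, and $\pi_\tau$ commuting with $\dagger$ for the trace form), with only a cosmetic slip: under the convention $\ip{A_1}{A_2}=\tr(A_1^\dagger A_2)$ the functional $A\mapsto\ip{A}{O}$ is conjugate-linear rather than linear, which still commutes with integration against the real Haar measure.
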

\begin{proof}
    By definition,
    \begin{equation}
        \widetilde \beta_{\rho, O}(g) = \ip{\rho^g}{O}\, ,
    \end{equation}
 so the result follows by applying Schur's Orthogonality Relations to the decomposition \eqref{op_inv_decomp}. The group average annihilates every nontrivial irreducible component, leaving only the components carrying the trivial representation.
\end{proof}

\begin{proposition*}
\label{prop2-ap}
    Under the hypothesis of $2$-design, if $\varphi$ is an irrep and $\tr(O) = 0$, then $\mathbb E_{\vb*\theta}[\ell_{\vb*\theta}(\rho,O)] = 0$ and $\mathrm{Var}_{\vb*\theta}[\ell_{\vb*\theta}(\rho, O)]=\langle\beta_{\rho, O}^2\rangle$.
\end{proposition*}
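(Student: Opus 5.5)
The argument rests on Lemma \ref{lemma:mean_beta} together with the identities \eqref{mean_loss}--\eqref{var_loss}, which are what the $2$-design hypothesis supplies. By \eqref{mean_loss}, the mean of the cost function equals $\langle\beta_{\rho,O}\rangle$, and Lemma \ref{lemma:mean_beta} gives $\langle\beta_{\rho,O}\rangle=\tr(\pi_\tau(\rho)\pi_\tau(O))$. It therefore suffices to identify the trivial isotypic component $\pi_\tau$ when $\varphi$ is irreducible, and to show that $\pi_\tau(O)=0$ whenever $\tr(O)=0$.

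The key step is the identification of the $G$-fixed subspace of $\mathcal B(\mathcal H_n)$ under the action \eqref{op_action}. An operator $A$ is fixed if and only if $\varphi(g)A=A\varphi(g)$ for all $g\in G$, that is, if and only if $A$ is an intertwiner of $\varphi$ with itself. Since $\varphi$ is an irrep on the finite-dimensional complex space $\mathcal H_n$, Schur's lemma forces $A=c\mathds 1$. Hence the trivial sector $\mathcal H_\tau\otimes\mathbb C^{\mathfrak m(\tau)}$ is exactly $\mathrm{span}\{\mathds 1\}$, so $\mathfrak m(\tau)=1$.

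Because the decomposition is Hilbert--Schmidt orthogonal, the orthogonal projection onto this sector is
\begin{equation*}
    \pi_\tau(A)=\frac{\ip{\mathds 1}{A}}{\ip{\mathds 1}{\mathds 1}}\mathds 1=\frac{\tr(A)}{2^n}\mathds 1 .
\end{equation*}
With $\tr(O)=0$ this gives $\pi_\tau(O)=0$, and therefore $\mathbb E_{\vb*\theta}[\ell_{\vb*\theta}(\rho,O)]=0$. More generally, the mean equals $\tr(\rho)\tr(O)/2^n=\tr(O)/2^n$, which vanishes under the hypothesis.

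The variance statement then follows by substituting this into \eqref{var_loss}: $\mathrm{Var}=\langle\beta^2_{\rho,O}\rangle-0^2=\langle\beta^2_{\rho,O}\rangle$.

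The only point requiring care is the fixed-point identification. One must use that $\mathcal B(\mathcal H_n)$ is a complex space, so that Schur's lemma gives scalar commutants. One must also use that the trivial sector in \eqref{op_inv_decomp} is precisely the full fixed subspace, so that $\pi_\tau$ is the orthogonal projection onto it. Both facts hold by construction of the decomposition, so no genuine obstacle is expected.
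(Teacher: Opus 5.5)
Your proof is correct and follows essentially the same route as the paper. Schur's lemma identifies the trivial sector with $\mathrm{span}\{\mathds 1\}$ when $\varphi$ is irreducible, so $\tr(O)=0$ gives $\pi_\tau(O)=0$; Lemma~\ref{lemma:mean_beta} then makes the mean vanish, and \eqref{var_loss} yields the variance identity. Your explicit projection $\pi_\tau(A)=\tr(A)\mathds 1/2^n$ simply spells out the paper's brief orthogonality remark $O\perp\mathds 1$.
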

\begin{proof}
    By Schur's Lemma, there is only one copy of $\tau$, carried by the span of $\mathds 1$, because $\varphi$ is an irrep. The hypothesis $\tr(O) = 0$ implies that $O \perp \mathds 1$, then $\langle\beta_{\rho, O}\rangle = 0$ by the previous lemma. 
\end{proof}

\begin{lemma*}\label{lemma:mean_beta_sq-ap}
    \begin{equation*}
    \begin{aligned}
        \langle \beta_{\rho, O}^2\rangle & = \sum_{\zeta\in\mathfrak{CG}(\varphi)}\mathcal C_\zeta(\rho, O)\, .
    \end{aligned}
    \end{equation*}
\end{lemma*}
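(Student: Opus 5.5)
The plan is to expand the square of the Fourier-type decomposition \eqref{Beta_projection} and show that the cross terms between distinct, inequivalent sectors integrate to zero. Since $\rho$ and $O$ are Hermitian and the action \eqref{op_action} preserves Hermiticity, $\widetilde\beta_{\rho,O}(g)$ is real. Hence
\begin{equation*}
    \widetilde\beta_{\rho,O}(g)^2=\Big|\sum_{\zeta}\ip{\pi_\zeta(\rho)^g}{O}\Big|^2=\sum_{\zeta,\zeta'}\ip{\pi_\zeta(\rho)^g}{O}\,\overline{\ip{\pi_{\zeta'}(\rho)^g}{O}} .
\end{equation*}
Integrating over $G$, the diagonal terms $\zeta=\zeta'$ are exactly $\mathcal C_\zeta(\rho,O)$ by \eqref{psi_comp_mean_beta_sq}, using $\overline{\ip{A}{B}}=\ip{B}{A}$. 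It therefore remains to prove that each off-diagonal integral vanishes when $\zeta\neq\zeta'$.

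To handle the off-diagonal terms, I will write each factor as a matrix coefficient. Because the projections $\pi_\zeta$ commute with the $G$-action (they are projections onto invariant subspaces), we have $\pi_\zeta(\rho)^g=\pi_\zeta(\rho^g)$. Together with the self-adjointness of $\pi_\zeta$ noted in the footnote, this gives
\begin{equation*}
    \ip{\pi_\zeta(\rho)^g}{O}=\ip{\pi_\zeta(O)}{\pi_\zeta(\rho)^g}^{\ast}.
\end{equation*}
Expanding in the basis \eqref{inv_basis}, the action of $g$ on $\mathcal H_\zeta\otimes\mathbb C^{\mathfrak m(\zeta)}$ is $\zeta(g)\otimes\mathds 1$. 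Each factor is therefore a finite linear combination of the matrix coefficients $\zeta(g)_{kl}$ (or their conjugates), with coefficients built from $\rho^{(\zeta,j)}_l$ and $O^{(\zeta,j)}_k$. The cross term is then a finite sum of integrals of the form $\int_G \zeta(g)_{kl}\overline{\zeta'(g)_{k'l'}}\,dg$. By Schur's Orthogonality Relations these integrals vanish whenever $\zeta$ and $\zeta'$ are nonequivalent, which holds for distinct elements of $\mathfrak{CG}(\varphi)$ by construction.

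The main care point is keeping the multiplicity copies straight. All copies of a given $\zeta$ must be grouped into a single $\pi_\zeta$, so that the only cross terms are between inequivalent irreps. Cross terms between two copies of the same irrep are not orthogonal, but they are absorbed into $\mathcal C_\zeta$ and are precisely what Lemma~\ref{lemma:C_zeta_basis} later resolves. One should also check that conjugation does not produce a pairing $\zeta$ versus $\overline{\zeta'}$. This is settled by writing both factors with the same orientation, as $\ip{\cdot}{\zeta(g)\,\cdot}$ and its complex conjugate, so that orthogonality is applied to $\zeta$ against $\zeta'$ directly. Summing the surviving diagonal terms then gives $\langle\beta_{\rho,O}^2\rangle=\sum_\zeta\mathcal C_\zeta(\rho,O)$.
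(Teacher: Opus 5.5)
Your proof is correct and follows essentially the same route as the paper. Both arguments use the reality of $\widetilde\beta_{\rho,O}$ to write its square as $\widetilde\beta_{\rho,O}\,\overline{\widetilde\beta_{\rho,O}}$, expand via \eqref{Beta_projection}, identify the diagonal terms as $\mathcal C_\zeta(\rho,O)$, and eliminate the cross terms between inequivalent sectors by Schur.

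The only difference is packaging. The paper averages the cross term into the operator $T_\rho^{\zeta,\zeta'}$, checks that it is a $G$-equivariant map from the $\zeta$-sector to the $\zeta'$-sector, and applies Schur's lemma copy by copy; it then reuses $T_\rho^{\zeta,\zeta}$ for the bound in Lemma~\ref{lemma:C_zeta_natural}. You instead pass to the basis \eqref{inv_basis} and apply the orthogonality relations for matrix coefficients directly, which leads more naturally into Lemma~\ref{lemma:C_zeta_basis}.
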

\begin{proof}
Since both $\rho$ and $O$ are Hermitian operators, we have that $\widetilde \beta_{\rho, O}(g)$ coincides with its complex conjugation, so \eqref{Beta_projection} gives
\begin{equation}\label{lift_beta_sq}
    \widetilde\beta_{\rho, O}(g)^2 = \sum_{\zeta,\zeta'\in \mathfrak{CG}(\varphi)}\!\ip{\pi_\zeta(\rho)^g}{O}\ip{O}{\pi_{\zeta'}(\rho)^g}.
\end{equation}
The second moment therefore consists of pairwise correlations between representation sectors. To isolate these contributions, we introduce the family of operators $T_\rho^{\zeta,\zeta'}$: for $\zeta,\zeta' \in \mathfrak{CG}(\varphi)$, let $T_\rho^{\zeta,\zeta'}:\mathcal B(\mathcal H_n)\to \mathcal B(\mathcal H_n)$ be the linear operator given by
\begin{equation}\label{T_op}
    T_\rho^{\zeta,\zeta'}(A) := \int_G\ip{\pi_\zeta(\rho)^g}{A}\pi_{\zeta'}(\rho)^g dg \, ,
\end{equation}
so that
\begin{equation}
    \langle\beta_{\rho, O}^2\rangle = \sum_{\zeta,\zeta'}\ip{O}{T_\rho^{\zeta,\zeta'}(O)}\, .
\end{equation}
The next step is to understand the structure of the operators $T_\rho^{\zeta,\zeta'}$. 

The first thing to know is that $T_\rho^{\zeta,\zeta'}$ commutes with the action \eqref{op_action}. Indeed, the invariance of the Haar integral straightforwardly implies that
\begin{equation}
     T_\rho^{\zeta,\zeta'}(A^h) = T_\rho^{\zeta,\zeta'}(A)^h
\end{equation}
for every $h \in G$.

Moreover, it is immediate from the definition that $T_\rho^{\zeta,\zeta'}$ vanishes when applied to any operator orthogonal to the $\zeta$-sector, and its image lies in the $\zeta'$-sector. In other words, $\ker(\pi_\zeta)\subset \ker(T_\rho^{\zeta,\zeta'})$ and $\mathrm{im}(T_\rho^{\zeta,\zeta'})\subset \mathrm{im}(\pi_{\zeta'})$, so it can be seen as an operator from the $\zeta$-sector to the $\zeta'$-sector. 

Finally, we conclude that $T_\rho^{\zeta,\zeta'}$ is identically null if $\zeta\ne \zeta'$. This is so because it can be written as a direct sum of equivariant maps between non-equivalent representations by taking the orthogonal decomposition
\begin{equation}
    \mathcal H_\zeta \otimes \mathbb C^{m(\zeta)} = \bigoplus_{j=1}^{m(\zeta)}\mathcal H_\zeta\otimes e_j
\end{equation}
of the $\zeta$-sector, cf. \eqref{inv_basis}-\eqref{proj_psi_decomp}. The claim, then, follows from Schur's Lemma.

Consequently, there is no cross contribution to $\langle \beta_{\rho,O}^2\rangle$ from different irreducible representation sectors; only contributions with $\zeta=\zeta'$ survive when we integrate \eqref{lift_beta_sq}. By construction, the surviving contribution from $\zeta$ is precisely $\mathcal C_\zeta(\rho, O)$ as in \eqref{psi_comp_mean_beta_sq}.
\end{proof}

\begin{lemma*}\label{lemma:C_tau-ap}
For the trivial representation,
\begin{equation*}
    \mathcal C_\tau(\rho, O) = \tr(\pi_\tau(\rho)\pi_\tau(O))^2\, . 
\end{equation*}
\end{lemma*}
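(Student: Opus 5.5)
The plan is to exploit the fact that the trivial sector is pointwise fixed by $G$. By definition, $\pi_\tau$ projects onto the isotypic component of $\tau$ in \eqref{op_inv_decomp}, that is, onto the subspace of operators $A$ with $A^g = A$ for all $g\in G$ (the commutant of $\varphi(G)$). Hence, for every $g\in G$,
\begin{equation*}
    \pi_\tau(\rho)^g = \varphi(g)\pi_\tau(\rho)\varphi(g)^\dagger = \pi_\tau(\rho)\, ,
\end{equation*}
and the integrand in the definition \eqref{psi_comp_mean_beta_sq} of $\mathcal C_\tau(\rho,O)$ does not depend on $g$.

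The key steps, in order, are as follows. First, justify the fixedness: $\mathcal H_\tau\otimes\mathbb C^{\mathfrak m(\tau)}$ is a direct sum of copies of the one-dimensional trivial representation, so $G$ acts on it as the identity. Second, substitute this into the definition:
\begin{equation*}
    \mathcal C_\tau(\rho,O) = \int_G |\tr(\pi_\tau(\rho)\pi_\tau(O))|^2\,dg = |\tr(\pi_\tau(\rho)\pi_\tau(O))|^2\, ,
\end{equation*}
where the last equality uses that the Haar measure is normalized.

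Third, remove the absolute value by showing that $\tr(\pi_\tau(\rho)\pi_\tau(O))$ is real. Since the action \eqref{op_action} commutes with taking adjoints, i.e. $(A^\dagger)^g=(A^g)^\dagger$, the fixed subspace is closed under $\dagger$. Because $\pi_\tau$ is the orthogonal projector onto it, $\pi_\tau$ commutes with $\dagger$, so $\pi_\tau(\rho)$ and $\pi_\tau(O)$ are Hermitian. The trace of a product of two Hermitian operators is real, so the squared modulus equals the square. Alternatively, one can invoke Lemma \ref{lemma:mean_beta}, which identifies this quantity with the real mean $\langle\beta_{\rho,O}\rangle$ of the real-valued function $\widetilde\beta_{\rho,O}$.

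The only point needing care is the realness and Hermiticity claim in the third step, specifically that $\pi_\tau$ commutes with $\dagger$. This holds because the Hilbert--Schmidt projector onto a $\dagger$-closed subspace commutes with the antiunitary involution $A\mapsto A^\dagger$. The rest is immediate.
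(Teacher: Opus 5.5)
Your proposal is correct and follows the paper's own argument: the paper's proof is exactly that $\pi_\tau(\rho)^g=\pi_\tau(\rho)$ makes the integrand constant, so normalization of the Haar measure yields the result. Your third step, which shows that $\pi_\tau(\rho)$ and $\pi_\tau(O)$ are Hermitian so the trace is real and $|\cdot|^2$ can be replaced by $(\cdot)^2$, fills in a detail the paper leaves implicit, and your justification of it is sound.
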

\begin{proof}
    It follows from $\pi_\tau(\rho)^g = \pi_\tau(\rho)$ and normalization of Haar measure.
\end{proof}

\begin{lemma*}\label{lemma:C_zeta_natural-ap}
For any $\zeta \in \mathfrak{CG}(\varphi)$, we have 
\begin{equation*}
    0\le \mathcal C_\zeta(\rho, O) \le \dfrac{1}{\dim \zeta}\norm{\pi_\zeta(\rho)}^2\norm{\pi_\zeta(O)}^2\, ,
\end{equation*}
where the rightmost inequality is an actual equation if $\zeta$ is multiplicity free.
\end{lemma*}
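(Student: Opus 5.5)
Nonnegativity is immediate from the second form of the definition \eqref{psi_comp_mean_beta_sq}, since $\mathcal C_\zeta(\rho,O)$ is the Haar integral of $|\tr(\pi_\zeta(\rho)^g\pi_\zeta(O))|^2\ge 0$. For the upper bound I will first derive the coordinate formula of Lemma \ref{lemma:C_zeta_basis} directly from Schur's Orthogonality Relations, and then bound it by Cauchy--Schwarz. The multiplicity-free case will then be the case in which Cauchy--Schwarz is trivially an equality.

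\textbf{Step 1: matrix coefficients.} I fix the basis \eqref{inv_basis}, so $\zeta(g)$ acts on each copy $\mathcal H_\zeta\otimes e_j$ by the same unitary matrix $\zeta_{kl}(g)$. I write
\begin{equation*}
\pi_\zeta(\rho)^g=\sum_{j,l}\rho^{(\zeta,j)}_l\sum_k \zeta_{kl}(g)\,u(\zeta,j;k).
\end{equation*}
Then
\begin{equation*}
\ip{\pi_\zeta(\rho)^g}{O}=\sum_{k,l}\overline{\zeta_{kl}(g)}\,M_{kl},\qquad M_{kl}:=\sum_j O^{(\zeta,j)}_k\overline{\rho^{(\zeta,j)}_l}.
\end{equation*}
Taking the squared modulus and integrating, Schur's Orthogonality Relations give $\int_G\zeta_{kl}(g)\overline{\zeta_{k'l'}(g)}\,dg=\delta_{kk'}\delta_{ll'}/\dim\zeta$. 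This yields
\begin{equation*}
\mathcal C_\zeta(\rho,O)=\frac{1}{\dim\zeta}\sum_{k,l}|M_{kl}|^2=\frac{1}{\dim\zeta}\norm{M}^2 .
\end{equation*}

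\textbf{Step 2: Cauchy--Schwarz.} Let $R$ and $Q$ be the $\dim\zeta\times\mathfrak m(\zeta)$ matrices with entries $R_{lj}=\rho^{(\zeta,j)}_l$ and $Q_{kj}=O^{(\zeta,j)}_k$. Then $M=Q R^\dagger$. Applying Cauchy--Schwarz to each entry, $|M_{kl}|^2\le\left(\sum_j|O^{(\zeta,j)}_k|^2\right)\left(\sum_j|\rho^{(\zeta,j)}_l|^2\right)$. Summing over $k,l$ gives $\norm{M}^2\le\norm{Q}^2\norm{R}^2$. Since the basis is orthonormal, $\norm{Q}^2=\norm{\pi_\zeta(O)}^2$ and $\norm{R}^2=\norm{\pi_\zeta(\rho)}^2$. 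This proves the upper bound.

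\textbf{Step 3: multiplicity-free case.} When $\mathfrak m(\zeta)=1$, each inner sum over $j$ has a single term. Hence $|M_{kl}|^2=|O_k|^2|\rho_l|^2$ holds exactly, and the bound is an equality.

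\textbf{Main obstacle.} The delicate point is Step 1. There I must justify that a single matrix-valued function $\zeta_{kl}(g)$ describes the action on every multiplicity copy, i.e.\ that the resolution \eqref{inv_basis} is chosen so that the identifications $\mathcal H_\zeta\otimes e_j\cong\mathcal H_\zeta$ are $G$-equivariant. This is true by the definition of the isotypic decomposition \eqref{op_inv_decomp}. I also need care with the conjugations coming from the inner product convention, so that the Schur relations apply to the correct products of coefficients. Once this is settled, the rest is elementary linear algebra.
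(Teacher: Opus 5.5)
Your proof is correct, but it reaches the bound by a different route from the paper.

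The paper stays basis-free. It writes $\mathcal C_\zeta(\rho,O)=\ip{O}{T^{\zeta,\zeta}_\rho(O)}$, where $T^{\zeta,\zeta}_\rho(A)=\int_G\ip{\pi_\zeta(\rho)^g}{A}\,\pi_\zeta(\rho)^g\,dg$ is a positive, $G$-equivariant operator. By Schur's lemma, $T^{\zeta,\zeta}_\rho$ acts on the $\zeta$-sector as $\mathds 1_{\mathcal H_\zeta}\otimes B$, so every eigenvalue has multiplicity $\dim\zeta$ and $\lambda_{\max}(T^{\zeta,\zeta}_\rho)\le\tr(T^{\zeta,\zeta}_\rho)/\dim\zeta$. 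It then computes $\tr(T^{\zeta,\zeta}_\rho)=\norm{\pi_\zeta(\rho)}^2$. In the multiplicity-free case $T^{\zeta,\zeta}_\rho$ is a scalar, which gives the equality.

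You instead first prove the coordinate formula of Lemma~\ref{lemma:C_zeta_basis}; your handling of the conjugations and of Schur orthogonality for matrix coefficients is right. The bound then reduces to the elementary Frobenius-norm inequality $\norm{QR^\dagger}\le\norm{Q}\,\norm{R}$. In your notation $B$ is, up to transposition, $(\dim\zeta)^{-1}R^\dagger R$, so the two arguments are closely related.

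What each route buys:
\begin{itemize}
\item The paper's version needs no choice of multiplicity resolution. It also passes through the sharper intermediate bound $\lambda_{\max}(T^{\zeta,\zeta}_\rho)\norm{\pi_\zeta(O)}^2$.
\item Your version is more hands-on and yields Lemma~\ref{lemma:C_zeta_basis} at the same time.
\item Your version also makes the equality case transparent. Entrywise Cauchy--Schwarz is tight exactly when all rows of $Q$ and $R$ are collinear with a single $v\in\mathbb C^{\mathfrak m(\zeta)}$. Equivalently, $\pi_\zeta(\rho)$ and $\pi_\zeta(O)$ lie in one irreducible copy $\mathcal H_\zeta\otimes v$, or one of them vanishes. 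This covers more than the multiplicity-free case.
\end{itemize}

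The ``obstacle'' you flag is not a gap. Conjugation is unitary for the Hilbert--Schmidt product, so the copies in \eqref{op_inv_decomp} can be chosen mutually orthogonal, with unitary intertwiners between them by Schur's lemma. That is exactly what the basis \eqref{inv_basis} presupposes.
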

\begin{proof}
By definition,
\begin{equation}
    C_\zeta(\rho, O) = \ip{O}{T^{\zeta,\zeta}_\rho(O)} = \int_G|\ip{\pi_\zeta(\rho)^g}{O}|^2dg\, ,
\end{equation}
and the integrand is a nonnegative function, so the integral is nonnegative as well (this reflects the fact that $T^{\zeta,\zeta}_\rho$ is a positive operator). Also, by Schur's Lemma, the operator $T^{\zeta,\zeta}_\rho$ is diagonalizable, so its operator norm is its highest eigenvalue, which is, in turn, bounded above by $\tr(T_\rho^{\zeta,\zeta})/\dim\zeta$, hence
\begin{equation}
    \mathcal C_\zeta(\rho, O) \le \norm{\pi_\zeta(O)}^2\tr(T^{\zeta,\zeta}_\rho)/\dim\zeta\, .
\end{equation}
Now, note that
\begin{equation}
    \tr(T^{\zeta,\zeta}_\rho) = \int_G\ip{\pi_\zeta(\rho)^g}{\pi_\zeta(\rho)^g}dg = \norm{\pi_\zeta(\rho)}^2\, ,
\end{equation}
yielding the upper bound in the statement. If $\zeta$ is multiplicity free, we have
\begin{equation}
    T^{\zeta,\zeta}_\rho = \dfrac{\tr(T^{\zeta,\zeta}_\rho)}{\dim \zeta}\mathds 1\, ,
\end{equation}
so the equation is given by Schur's Orthogonality Relations.
\end{proof}

\

\begin{lemma*}\label{lemma:C_zeta_basis-ap}
For \eqref{inv_basis}-\eqref{proj_psi_decomp},
    \begin{equation*}
        \mathcal C_\zeta(\rho, O) = \dfrac{1}{\dim \zeta}\sum_{k,l=1}^{\dim\zeta}\left|\sum_{j=1}^{\mathfrak m(\zeta)}O^{(\zeta,j)}_{k}\overline{\rho^{(\zeta,j)}_l}\right|^2\, .
    \end{equation*}
\end{lemma*}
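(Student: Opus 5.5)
Our approach is to expand both $\pi_\zeta(\rho)$ and $\pi_\zeta(O)$ in the basis \eqref{inv_basis} and then apply Schur's Orthogonality Relations to the matrix coefficients of the irrep $\zeta$. The key point is that $G$ acts on the $\zeta$-sector $\mathcal H_\zeta\otimes\mathbb C^{\mathfrak m(\zeta)}$ as $\zeta(g)\otimes\mathds 1$. This holds because the basis $u(\zeta,j;k)=u(\zeta;k)\otimes e_j$ is built so that each copy $\mathcal H_\zeta\otimes e_j$ is invariant and carries the same (not merely equivalent) unitary matrix realization of $\zeta$ in the standard basis $\{u(\zeta;k)\}$. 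Concretely,
\begin{equation*}
u(\zeta,j;l)^g=\sum_{k=1}^{\dim\zeta}\zeta_{kl}(g)\,u(\zeta,j;k),\qquad \zeta_{kl}(g)=\ip{u(\zeta;k)}{\zeta(g)u(\zeta;l)} .
\end{equation*}
Fixing such a resolution of the multiplicity so that all copies use identical matrix coefficients is implicit in the paper's choice of basis. I would state this explicitly, since it is the only genuinely delicate point.

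\textbf{Step 1 (the integrand).} From \eqref{proj_psi_decomp},
\begin{equation*}
\pi_\zeta(\rho)^g=\sum_{j,l}\rho^{(\zeta,j)}_l\sum_k\zeta_{kl}(g)\,u(\zeta,j;k).
\end{equation*}
Taking the inner product with $O$, which is antilinear in the first slot and has $\ip{u(\zeta,j;k)}{O}=O^{(\zeta,j)}_k$, gives
\begin{equation*}
\ip{\pi_\zeta(\rho)^g}{O}=\sum_{k,l}\overline{\zeta_{kl}(g)}\,M_{kl},\qquad M_{kl}:=\sum_{j=1}^{\mathfrak m(\zeta)}O^{(\zeta,j)}_k\overline{\rho^{(\zeta,j)}_l}.
\end{equation*}
The multiplicity sum collapses into the single matrix $M$ precisely because the coefficients $\zeta_{kl}(g)$ do not depend on $j$.

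\textbf{Step 2 (Schur).} Substituting into \eqref{psi_comp_mean_beta_sq} gives
\begin{equation*}
\mathcal C_\zeta(\rho,O)=\sum_{k,l,k',l'}\overline{M_{kl}}M_{k'l'}\int_G\zeta_{kl}(g)\overline{\zeta_{k'l'}(g)}\,dg .
\end{equation*}
Schur's Orthogonality Relations for the normalized Haar measure give $\int_G\zeta_{kl}\overline{\zeta_{k'l'}}\,dg=\delta_{kk'}\delta_{ll'}/\dim\zeta$. Hence
\begin{equation*}
\mathcal C_\zeta(\rho,O)=\frac{1}{\dim\zeta}\sum_{k,l}|M_{kl}|^2,
\end{equation*}
which is exactly the statement.

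\textbf{Consistency check.} As a sanity check, the identity $\sum_{k,l}|M_{kl}|^2=\norm{M}_{HS}^2\le\norm{\pi_\zeta(O)}^2\norm{\pi_\zeta(\rho)}^2$ follows from Cauchy--Schwarz in $j$, which recovers Lemma \ref{lemma:C_zeta_natural}. In the multiplicity-free case $M$ has rank one, so the bound becomes an equality, again consistent with that lemma.

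\textbf{Main obstacle.} The main obstacle is justifying the equal-matrix-coefficient convention across copies. If the $j$-th copy were realized through a different unitary intertwiner, one would first conjugate by that intertwiner, which amounts to a unitary change of basis within each copy, to align all copies. The formula is stated with respect to the aligned basis.
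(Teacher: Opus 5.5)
Your proof is correct and follows the same route as the paper, which substitutes \eqref{proj_psi_decomp} into \eqref{psi_comp_mean_beta_sq} and applies Schur's Orthogonality Relations; you carry out that computation explicitly via the matrix $M_{kl}=\sum_{j}O^{(\zeta,j)}_k\overline{\rho^{(\zeta,j)}_l}$. Your point that every copy $\mathcal H_\zeta\otimes e_j$ must carry the same matrix realization $\zeta(g)\otimes\mathds 1$ is exactly the convention the paper leaves implicit in identifying the $\zeta$-sector with $\mathcal H_\zeta\otimes\mathbb C^{\mathfrak m(\zeta)}$, so it is a useful clarification rather than a gap.
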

\begin{proof}
The claim follows by using \eqref{proj_psi_decomp} in \eqref{psi_comp_mean_beta_sq} and invoking, once again, Schur's Orthogonality Relations.
\end{proof}

\begin{theorem*}\label{theo:mean_var_2-design-ap}
    Under the hypothesis of $2$-design, the mean and the variance of $\ell_{\vb*\theta}(\rho, O)$ are given by
    \begin{equation*}
        \mathbb E_{\vb*\theta}[\ell_{\vb*\theta}(\rho,O)] = \tr(\pi_\tau(O)\pi_\tau(\rho))\, ,
    \end{equation*}
    \begin{equation*}
    \begin{aligned}
        \mathrm{Var}_{\vb*\theta}[\ell_{\vb*\theta}(\rho, O)]& = \sum_{\substack{\zeta\in \mathfrak{CG}(\varphi)\\ \zeta\ne \tau}}\mathcal C_\zeta(\rho, O)\\
        & \le \sum_{\substack{\zeta\in \mathfrak{CG}(\varphi)\\ \zeta\ne \tau}}\dfrac{1}{\dim\zeta}\norm{\pi_\zeta(\rho)}^2\norm{\pi_\zeta(O)}^2\, .
    \end{aligned}
    \end{equation*}
\end{theorem*}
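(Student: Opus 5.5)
The plan is to assemble the theorem directly from the lemmas already established, with the $2$-design hypothesis used only once, at the start. First, by the $2$-design hypothesis, the first and second moments of $\ell_{\vb*\theta}(\rho,O)$ over the circuit parameters coincide with the corresponding Haar averages over $G$. This is exactly the content of \eqref{mean_loss} and \eqref{var_loss}: $\mathbb E_{\vb*\theta}[\ell_{\vb*\theta}(\rho,O)]=\langle\beta_{\rho,O}\rangle$ and $\mathrm{Var}_{\vb*\theta}[\ell_{\vb*\theta}(\rho,O)]=\langle\beta^2_{\rho,O}\rangle-\langle\beta_{\rho,O}\rangle^2$. The mean formula then follows immediately from Lemma \ref{lemma:mean_beta}, using that the trace of a product is symmetric, so $\tr(\pi_\tau(\rho)\pi_\tau(O))=\tr(\pi_\tau(O)\pi_\tau(\rho))$.

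For the variance, I would expand the second moment using Lemma \ref{lemma:mean_beta_sq} as $\langle\beta^2_{\rho,O}\rangle=\sum_{\zeta}\mathcal C_\zeta(\rho,O)$. I would then isolate the trivial summand $\zeta=\tau$. By Lemma \ref{lemma:C_tau}, $\mathcal C_\tau(\rho,O)=\tr(\pi_\tau(\rho)\pi_\tau(O))^2$, and by Lemma \ref{lemma:mean_beta} this is exactly $\langle\beta_{\rho,O}\rangle^2$. Subtracting it cancels the trivial term and leaves $\sum_{\zeta\neq\tau}\mathcal C_\zeta(\rho,O)$, which is the claimed exact formula.

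One subtlety needs care here: the trivial representation $\tau$ may occur with multiplicity $\mathfrak m(\tau)>1$ when $\varphi$ is reducible. In that case $\pi_\tau$ is the projection onto the whole fixed subspace $\mathcal H_\tau\otimes\mathbb C^{\mathfrak m(\tau)}$. I would check that Lemma \ref{lemma:C_tau} still applies in this setting. It does, because every vector in the $\tau$-isotypic sector is $G$-fixed, so $\pi_\tau(\rho)^g=\pi_\tau(\rho)$, and the integrand defining $\mathcal C_\tau$ is then constant in $g$. Since $\rho$ and $O$ are Hermitian and the trivial sector is closed under adjoints, the quantity $\tr(\pi_\tau(\rho)\pi_\tau(O))$ is real. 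Therefore $|\cdot|^2$ agrees with the square, and the cancellation is exact.

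The upper bound follows by applying the right-hand inequality of Lemma \ref{lemma:C_zeta_natural} term by term to each nontrivial $\zeta$ and summing. The sum is finite because $\mathcal B(\mathcal H_n)$ is finite-dimensional, so there is no convergence issue. I expect no serious obstacle, since all the representation-theoretic work, namely the vanishing of cross terms and the Schur-type bound, is already contained in the lemmas. The only point I would verify carefully is the identification of the trivial contribution with the squared mean in the presence of multiplicity, as described above.
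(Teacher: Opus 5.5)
Your proposal is correct and follows the paper's proof: the paper also reduces to Haar moments via \eqref{mean_loss}--\eqref{var_loss}, reads the mean off Lemma~\ref{lemma:mean_beta}, and obtains the variance identity and the bound by combining Lemmas~\ref{lemma:mean_beta_sq}, \ref{lemma:C_tau} and \ref{lemma:C_zeta_natural}. The paper leaves two details implicit that you handle correctly: the cancellation of $\mathcal C_\tau(\rho,O)$ against $\langle\beta_{\rho,O}\rangle^2$ survives $\mathfrak m(\tau)>1$, and $\tr(\pi_\tau(\rho)\pi_\tau(O))$ is real, so its square equals its modulus squared.
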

\begin{proof}
    Recall \eqref{mean_loss}-\eqref{var_loss}. The expression for $\mathbb E_{\vb*\theta}(\ell_{\vb*\theta}(\rho,O))$ is just Lemma \ref{lemma:mean_beta-ap}. The expression for $\mathrm{Var}_{\vb*\theta}[\ell_{\vb*\theta}(\rho, O)]$ follows from Lemmas \ref{lemma:mean_beta_sq-ap}-\ref{lemma:C_zeta_natural-ap}.
\end{proof}

\begin{lemma*}\label{lemma:bound_sch_norm-ap}
If $\varphi$ is an irrep, then
    \begin{equation*}
        \langle \beta_{\rho,O}^2\rangle\le \dfrac{\norm{O}_\infty\norm{O}_1}{2^n}\, .
    \end{equation*}
\end{lemma*}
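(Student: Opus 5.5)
We will prove the bound by writing the second moment as a quadratic form in $O$ governed by an explicit Haar-averaged operator, and then bounding that form with Hölder's inequality. We have
\begin{equation*}
\langle\beta_{\rho,O}^2\rangle=\int_G \tr(\rho^g O)^2\,dg=\int_G \tr(\rho^g O\,\rho^g O)\,dg ,
\end{equation*}
using that $\rho=\ketbra{\psi}{\psi}$ is a pure state. Indeed, for a rank-one projector, $\tr(\rho^gO)^2=\bra{\psi_g}O\ket{\psi_g}^2=\tr(\rho^gO\rho^gO)$, where $\ket{\psi_g}=\varphi(g)\ket\psi$.

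The key step is to evaluate the average
\begin{equation*}
M:=\int_G \rho^g O\rho^g\,dg .
\end{equation*}
Once $M$ is known, we have $\langle\beta^2_{\rho,O}\rangle=\tr(MO)$. We bound this by Hölder's inequality, $|\tr(MO)|\le\norm{M}_\infty\norm{O}_1$, so it suffices to show $\norm{M}_\infty\le\norm{O}_\infty/2^n$.

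To control $M$, we write $\rho^gO\rho^g=\bra{\psi_g}O\ket{\psi_g}\,\rho^g$. Then, for any unit vector $v$, using $|\bra{\psi_g}O\ket{\psi_g}|\le\norm{O}_\infty$ and $\rho^g\ge 0$,
\begin{equation*}
|\bra{v}M\ket{v}|\le\norm{O}_\infty\int_G|\langle v|\psi_g\rangle|^2\,dg=\norm{O}_\infty\,\bra v\Big(\int_G\rho^g\,dg\Big)\ket v .
\end{equation*}
Since $\varphi$ is an irrep, Schur's Lemma (equivalently Lemma~\ref{lemma:mean_beta}, via the uniqueness of $\tau$ established in Proposition~\ref{prop2}) gives $\int_G\rho^g\,dg=\pi_\tau(\rho)=\tr(\rho)\mathds 1/2^n=\mathds 1/2^n$. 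Hence $|\bra vM\ket v|\le\norm{O}_\infty/2^n$.

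The Hermiticity of $M$, needed to pass from the numerical range to the operator norm, follows from $O$ and $\rho^g$ being Hermitian, since then $(\rho^gO\rho^g)^\dagger=\rho^gO\rho^g$. Consequently $\norm{M}_\infty\le\norm{O}_\infty/2^n$. Combining this with Hölder's inequality,
\begin{equation*}
\langle\beta^2_{\rho,O}\rangle=\tr(MO)\le\norm{M}_\infty\norm{O}_1\le\frac{\norm{O}_\infty\norm{O}_1}{2^n},
\end{equation*}
which is the claim.

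The main obstacle is avoiding the full two-copy commutant, which is unknown for a general irrep $\varphi$. The device above sidesteps it: one factor of $O$ is bounded pointwise by its operator norm, which requires only first-moment information. That information is exactly what irreducibility supplies through $\int_G\rho^g\,dg=\mathds 1/2^n$. The one point needing care is the positivity step, namely that $|\bra{\psi_g}O\ket{\psi_g}|\,\rho^g\le\norm{O}_\infty\rho^g$ integrates correctly. We handle this by working with $|\bra vM\ket v|$ and integrating the scalar inequality, as done above, rather than an operator inequality.
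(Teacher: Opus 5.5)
Your proof is correct. It rests on the same mechanism as the paper's: one factor of $\beta_{\rho,O}(g)$ is bounded pointwise by $\norm{O}_\infty$, and irreducibility is used exactly once, to collapse the first moment via $\int_G\rho^g\,dg=\mathds 1/2^n$. The packaging differs, however.

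The paper argues entirely at the scalar level. The inequality $|\tr(\rho' O)|\le\tr(\rho'|O|)\le\norm{O}_\infty$ gives $\beta_{\rho,O}(g)^2\le\norm{O}_\infty\,\beta_{\rho,|O|}(g)$ pointwise. Lemma~\ref{lemma:mean_beta} applied to $|O|$ then yields $\langle\beta_{\rho,|O|}\rangle=\norm{O}_1/2^n$.

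You instead average first to produce the operator $M$, bound its numerical range, and recover the trace norm through H\"older's inequality. The splitting $O=O_+-O_-$, which the paper makes explicit by passing to $|O|$, is hidden inside H\"older in your version.

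The paper's route is shorter and never uses $\rho^2=\rho$, so it holds verbatim for mixed initial states. Your use of purity to write $\tr(\rho^gO)^2=\tr(\rho^gO\rho^gO)$ is inessential: taking $M=\int_G\tr(\rho^gO)\,\rho^g\,dg$ makes the same argument work for any density matrix. Your version does have the merit of isolating an explicit operator inequality, $\norm{M}_\infty\le\norm{O}_\infty/2^n$, which makes it transparent exactly where irreducibility enters.
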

\begin{proof}
    First, note that
    \begin{equation}
        |\beta_{\rho,O}(g)|\le \beta_{\rho, |O|}(g)\, ,
    \end{equation}
    hence
    \begin{equation}
        \langle \beta_{\rho,O}^2\rangle \le \langle \beta_{\rho,|O|}^2\rangle\, .
    \end{equation}
    Now, let $\widetilde O = \norm{O}_\infty^{-1}|O|$. By definition, $0\le \widetilde O\le \mathds 1$, so $\beta_{\rho, \widetilde O}(g)^2\le \beta_{\rho,\widetilde O}(g)$. Therefore
    \begin{equation}
        \langle \beta_{\rho, \widetilde O}^2\rangle\le \langle \beta_{\rho, \widetilde O}\rangle\,= \dfrac{1}{2^n}\tr(\widetilde O) = \dfrac{\norm{O}_\infty^{-1}\norm{O}_1}{2^n}.
    \end{equation}
    But
    \begin{equation}
        \langle\beta_{\rho, |O|}^2\rangle = \norm{O}_\infty^2\langle\beta_{\rho, \widetilde O}^2\rangle\, .
    \end{equation}
    Putting everything together, we get the statement.
\end{proof}

\begin{theorem*}\label{theo:bound_sch_norm-ap}
    Under the hypothesis of $2$-design, if $\varphi$ is an irrep, then the variance of $\ell_{\vb*\theta}(\rho, O)$ satisfies
    \begin{equation}
        \mathrm{Var}_{\vb*\theta}[\ell_{\vb*\theta}(\rho, O)]\le \dfrac{\norm{O}_\infty\norm{O}_1-\tr(O)^2}{2^n}\, .
    \end{equation}
\end{theorem*}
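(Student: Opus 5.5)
The plan is to combine the variance identity \eqref{var_loss} with the two ingredients already available for an irreducible $\varphi$: the explicit form of the mean (Lemma \ref{lemma:mean_beta} together with the argument of Proposition \ref{prop2}), and the Schatten-norm bound on the second moment (Lemma \ref{lemma:bound_sch_norm}). Under the $2$-design hypothesis, \eqref{mean_loss}--\eqref{var_loss} give
\begin{equation*}
\mathrm{Var}_{\vb*\theta}[\ell_{\vb*\theta}(\rho,O)] = \langle\beta_{\rho,O}^2\rangle - \langle\beta_{\rho,O}\rangle^2 ,
\end{equation*}
so it suffices to bound the first term from above and to compute the second term exactly.

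First, I compute the mean. Since $\varphi$ is an irrep, Schur's Lemma shows that the trivial isotypic component of $\mathcal B(\mathcal H_n)$ is exactly $\mathrm{span}\{\mathds 1\}$, with multiplicity one. Hence $\pi_\tau(A)=2^{-n}\tr(A)\mathds 1$. By Lemma \ref{lemma:mean_beta} and $\tr(\rho)=1$,
\begin{equation*}
\langle\beta_{\rho,O}\rangle = \tr(\pi_\tau(\rho)\pi_\tau(O)) = \frac{\tr(O)}{2^n}.
\end{equation*}
Second, I insert $\langle\beta_{\rho,O}^2\rangle\le \norm{O}_\infty\norm{O}_1/2^n$ from Lemma \ref{lemma:bound_sch_norm} and subtract $\langle\beta_{\rho,O}\rangle^2$. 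This yields
\begin{equation*}
\mathrm{Var}_{\vb*\theta}[\ell_{\vb*\theta}(\rho,O)] \le \frac{\norm{O}_\infty\norm{O}_1}{2^n}-\frac{\tr(O)^2}{4^n}.
\end{equation*}

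The main obstacle is matching this with the subtracted term as printed in the statement, namely $\tr(O)^2/2^n$ rather than $\tr(O)^2/4^n$. The bound this argument actually proves subtracts $\tr(O)^2/4^n$. For traceless observables the two coincide, so the statement holds as written in that case, which covers the ANNNI application where $\tr(H)=0$.

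For general $O$ the printed form does not follow from this argument and can fail. Take $O=\mathds 1$: the variance is $0$, but the printed right-hand side is $(2^n-4^n)/2^n<0$. I would therefore prove the version with $\tr(O)^2/4^n$ and flag that the denominator of the subtracted term must be $4^n$, unless $O$ is assumed traceless.

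There is no way to tighten the Lemma \ref{lemma:bound_sch_norm} step so that the printed version holds in general. The counterexample $O=\mathds 1$ rules it out, because there $\langle\beta^2\rangle=1$ exactly equals the Schatten bound $2^n/2^n$.
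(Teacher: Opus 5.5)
Your argument is the paper's own proof, which simply combines Lemma~\ref{lemma:mean_beta} (so that $\langle\beta_{\rho,O}\rangle=\tr(O)/2^n$ for irreducible $\varphi$) with Lemma~\ref{lemma:bound_sch_norm}. You are also right that this argument only yields $\mathrm{Var}_{\vb*\theta}[\ell_{\vb*\theta}(\rho,O)]\le \bigl(\norm{O}_\infty\norm{O}_1-\tr(O)^2/2^n\bigr)/2^n$: the printed subtracted term $\tr(O)^2/2^n$ is an error in the statement, your $O=\mathds 1$ counterexample shows the printed bound is false, and the two versions agree only when $\tr(O)=0$, as for the ANNNI Hamiltonian.
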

\begin{proof}
    It follows from Lemmas \ref{lemma:mean_beta-ap} and \ref{lemma:bound_sch_norm-ap}
\end{proof}

\end{document}